\newtheorem{definition}{Definition}
\newtheorem{theorem}{Theorem}
\newtheorem{lemma}[theorem]{Lemma}
\newtheorem{proposition}[theorem]{Proposition}
\newtheorem{remark}[theorem]{Remark }
\definecolor{verde}{rgb}{0.2,0.6,0.6}
\newcommand{\be}{\begin{equation}}
\newcommand{\ee}{\end{equation}}
\newcommand{\bea}{\begin{eqnarray}}
\newcommand{\eea}{\end{eqnarray}}
\newcommand{\bean}{\begin{eqnarray*}}
\newcommand{\eean}{\end{eqnarray*}}
\begin{document}

\title{\LARGE \bf
On the detection and identification of edge  disconnections in a
multi-agent consensus network}

 \author{Gianfranco Parlangeli and Maria Elena Valcher 
 \thanks{G. Parlangeli is with the Dipartimento di Ingegneria dell'Innovazione, Universit\`a del Salento,
    Via per Monteroni, 73100 Lecce, Italy,
     e-mail: \texttt{gianfranco.parlangeli@unisalento.it}. M.E. Valcher is with
 the Dipartimento di Ingegneria dell'Informazione
 Universit\`a di Padova, 
    via Gradenigo 6B, 35131 Padova, Italy, e-mail:  \texttt{meme@dei.unipd.it}.}
   } 
 \maketitle

\thispagestyle{empty}
\pagestyle{empty}

 \begin{abstract} In this paper we   investigate the problem of
the sudden disconnection of an edge in a discrete-time multi-agent consensus network.
{\color{black} If the graph remains strongly connected, the multi-agent
  system still achieves consensus, but} in general,  unless   the information exchange between each pair of agents is symmetric, the agents' states converge to a drifted
value of the original consensus value. 
{\color{black} Consequently the edge disconnection can go unnoticed. 
In this paper the problems of 
detecting an edge disconnection and of identifying in a finite number of steps the exact edge that got disconnected are investigated. Necessary and sufficient conditions for both problems to be solvable are presented, both in case all the agents' states are available and in case only a subset of the agents' states
is measured.  Finally, an example of a network of 7
  agents  is provided, to illustrate some of the theoretical results
derived in the paper.}
\end{abstract}

{\small{\bf Keywords}$-$ Multi-agent system, consensus, strongly connected network, Laplacian,  detection and identification, {\color{black} edge disconnection}.}


%
%
\section{Introduction}
 The technology advances   
of the last decades
  brought brand new opportunities in several engineering areas and
  stimulated a significant thrust of research  in
  communications, information processing and control theory 
  \cite{antsaklis2007special,gharavi2003special,zhao2004wireless},
The availability of miniaturized low-cost processing and
integrated communication devices allowing peer-to-peer communication pushed a renewed interest in the
analysis and design of distributed heterogeneous sensor networks and in the
cooperative control of networks of autonomous agents
\cite{baillieul2007control,chong2003sensor}. 

For a network of agents,
\emph{consensus} is a fundamental target, useful for coordination,
and a key tool 
to achieve decentralized
architectures
\cite{OlfatiFaxMurray,SurveyConsensus2005, RenBeardAtkins}.
\emph{Consensus} refers to the situation when   agents
achieve a common decision
 on a local variable, which is updated performing local
computation and exchanging local information
\cite{OlfatiFaxMurray}.
Under the assumption that each agent follows a prescribed protocol,
the use of  such local variable makes 
the system behave as if the agents
were fully connected (i.e. as if the communication graph were complete)
\cite{OF-Murray2004}, and this condition enables a variety of
applications in a wide range of fields \cite{DistCtrlRobotNetw},
\cite{leonard2007collective}, \cite{ren2007distributed}, 
\cite{ren2010distributed}, 
\cite{wah2007synchronization}. 
However, the adherence of each agent to the agreed protocol may fail for a
number of reasons, and several research directions have been explored to achieve  {\color{black} robust
  consensus in the presence of intermittent transmissions, transmission errors, faults or noise \cite{li2017robust}, \cite{miah2014nonuniform}, \cite{qin2017recent,you2013consensus}. 

Malfunctions in a network,  may be temporary or intermittent,  
casual or intentional (for instance, they may  be the results of a cyber-attack \cite{pasqualetti2015control}),
and a long stream of  research has been devoted to 
address fault-detection and identification (FDI) problems or fault-tolerant control strategies for multi-agent systems, see \cite[Section~II-B]{qin2017recent}
for an extensive literature review. 
FDI algorithms for multi-agent systems split into two classes:
centralized  and distributed ones. In the first case \cite{costanzo2017using,pasqualettiTAC2013,rahimian2015failure},
a central unit gathers all the   system information to perform the fault diagnosis
algorithm. In the
  distributed case \cite{davoodi2014distributed,pasqualettiTAC2013,teixeira2014distributed}, all agents run some local fault
detection algorithm, based on  local information and on
the signals received from   neighbouring agents, and they coordinate to decide whether and where the fault occurred.

Most of the literature on  FDI for multi-agent systems, however,
assumes that faults act additively either on the state-update
equations or   on  the signals exchanged by pairs of agents
\cite{davoodi2014distributed,pasqualetticarliIFAC,pasqualettiTAC2013,teixeira2014distributed}. Even
if an edge   disconnection can be modelled in this way, this set-up is
too general and it does not exploit the correlation between the fault
signal and the state evolution that characterizes faults resulting
from edge disconnections. As a result, FDI conditions deduced in the
general set-up do not exploit the special nature of these faults, thus
leading to conservative conditions for a successful  
  and prompt 
diagnosis when dealing with   edge disconnection. 

On the other hand, in general, the problem of detecting an edge disconnection has been investigated for multi-agent systems that are not necessarily consensus networks. For instance,
in \cite{BattistelliTesi} and \cite{patil2019indiscernible} the
possibility of detecting an edge or a node disconnection in a
multi-agent system is investigated, and the concepts of  discernibility from the states or the outputs are investigated.
In  \cite{Xue} the problem of detecting an edge disconnection is addressed for  a diffusive network, by resorting to an impulsive input 
applied at one specific node. Simple graph conditions allow to determine whether the problem is solvable. The residual signal is generated at the specific node by assuming that the whole network topology is known, and it depends on the whole state trajectory. 
The authors of  \cite{costanzo2017using}      use stochastic
techniques and propose an algorithm based on the full state knowledge
to promptly detect abrupt topological
changes of the network such as a link failure, creation, or degradation.
In \cite{dhal2013link,dhal2015detecting,torres2015detecting} 
the detection of a  link disconnection in a network  from noisy measurements at a single node is investigated. By making use of  a Maximum A Posteriori Probability technique, conditions for  asymptotic detection, in terms of the network spectrum and graph, are derived. If the detector does not know the whole state of the system, perfect detection
is not possible. Note, also, that none of the previous references explicitly addresses the identification problem.

Few contributions have specifically addressed the problem of detecting and/or identifying 
a link disconnection in a consensus network, by making use of the network properties and of the very specific change of the network description that results from this fault.  Specifically,
in \cite{RahimianCDC2012}  a multi-agent
system subjected to multiple communication link failures is considered,
with the goal of deriving useful design guidelines for reliable and
fault-tolerant multi-agent networks. The  possibility of detecting multiple link failures corresponding to {\em at least some} initial conditions is characterised in terms of the communication graph properties.
In  \cite{pandey2019diffusion} 
an algorithm for diffusion of information among nodes reaching a weighted
agreement is proposed, which is informative of the topology of the network and can
be  applied to the detection  (but in general not the identification) of  a link failure.
In \cite{rahimian2015failure} a method to detect and identify link
failures in a network of continuous-time homogeneous agents, 
with a weighted and directed communication graph, is proposed, assuming that  only
the output responses of a subset of nodes are available. Jump discontinuities in
the output derivatives  are used to detect  link failures. The order of the derivative at which
the discontinuity  is observed depends on the relative
degree of the agents transfer matrix, and on  the distance of the observation point
from the disconnected edge. A graph based algorithm to identify the link is presented.
 It is worth remarking that all the previous references adopt a centralised approach to the problem solution.

This paper    focuses on the problem of detecting and identifying
an edge removal   within a network of agents reaching
consensus.  
This type of fault may result in a disconnected network, for which consensus is no longer achievable,
and in this case the fault cannot go unnoticed but it can be detected significantly
later than it occurred. Alternatively, the communication network may remain connected, but a different algorithm 
is performed with respect to the original one and this leads to agree on a final value which is different
from the one the original network would have achieved.
For this reason it is fundamental to detect and possibly identify disconnected edges, so that they can be promptly restored.
We assume, as it was done in the large majority of the aforementioned 
references, that the agents represent devices with limited functionalities. In general they have not the
capability of detecting an edge disconnection, not even when they are directly affected by it (by this meaning that they are either the transmitter or the receiver at the extremes of the disconnected edge), and hence they cannot send an alarm signal. This assumption is extremely realistic when dealing with large networks of cheap sensors, for instance, whose software is configured to perform very basic tasks. Accordingly, we assume that 
the detection and identification of a broken edge cannot be performed
  at local level   by means of a distributed algorithm that involves a subset of the agents
    and we propose   centralised algorithms   that are able to identify a link disconnection from the knowledge of either the entire state evolution  or
the state evolution of a subset of its agents.
In fact, a distributed FDI is typically possible
only under the condition 
that a selected group of agents is able to both generate a residual signal and to communicate   with each other  to reach a decision about the occurrence of a fault. This requires additional features with respect to the basic ones that are imposed by the consensus algorithm.
In addition to the fact that for certain multi-agent systems, the physical nature of the devices
does not offer alternatives, the choice of adopting a centralised approach has several motivations: first of all, what can be obtained in a centralised way 
always represents a benchmark that distributed solutions try to
approach as much as they can, and since a clear analysis of this
problem is not available yet in the literature, we believe that this
is the first goal to achieve. Secondly, a centralised solution
requires weaker properties in terms of
observability/reconstructibility  than the ones that a distributed solution would impose on the single agents.
  It is often the case that a wise choice of a small set of agents whose states need to be monitored in a centralised way allows an effective detection and identification, but those same agents would not be able to independently perform FDI. This aspect will be better clarified  at the end of the paper.
Finally, 
several distributed FDI algorithms rely on the estimate of the overall
state vector, and
this is extremely demanding from a computational point of view, as well as not realistic, since it
presumes that single agents may have a complete knowledge of the
overall system structure. Solutions that require the selected agents
to estimate only portions of the state vector typically impose very
strong conditions of the structural properties of the overall system
  and they are   strongly dependent on the specific system
    structure. 
}

The paper structure is the following one. Section II presents the
class of discrete-time multi-agent systems and the general setting of the problem. In Section III the effects of the disconnection of a single link are explored.
The problem of detecting an edge disconnection is tackled in Section
IV, and it is split into
four subsections. The discernibility {\color{black}   of two networks,  one of them obtained from the other 
as a result of the disconnection
of a link,  is studied in Subsection IV.A from a theoretical
standpoint, assuming that  the whole state vector is
available for measurement. Some special situations that prevent   discernibility are discussed}. In  Subsection IV.B  an
algorithm for link failure detection and isolation is introduced and
{\color{black} it is shown that, if discernibility conditions are
satisfied,     the algorithm can} detect each fault and isolate it under an
additional condition. {\color{black} Section V is devoted to the discernibility
problem when only a subset of
nodes is available and in Subsection V.A  an algorithm for  detecting the edge disconnection under these conditions
is provided.  Also in this case, if discernibility conditions are
satisfied, the algorithm} is able to detect each fault, while fault isolation is ensured under an
additional condition. {\color{black} In Section VI  the case of a network of 7 nodes with only three states available for measurement is considered,  and the results derived in the paper are illustrated for different edge disconnections.}

{\color{black} 
 The results provided in this paper 
 have been inspired by \cite{BattistelliTesi}, where the concept of discernibility of a multi-agent system
 from the (faulty) one resulting from an edge or a node disconnection have been first investigated.
 Compared to \cite{BattistelliTesi}, we tailor our analysis to a consensus network and hence account for the fact
 that discernibility in a consensus network does not reduce to the observability of a special matrix pair associated with the healthy and the faulty systems.  Indeed, the dominant eigenvalue/eigenvector that ensure consensus do not change after the edge disconnection and need to be separately accounted for.
 In addition we have explicitly addressed the identification problem and proposed an algorithm to identify which specific link of the system got disconnected. To achieve this goal we have proposed a residual generator that is based on a (full-order) dead-beat observer. This solution has the great advantage of zeroing the effects of the initial conditions in a finite number of steps, thus making it possible to promptly detect an edge disconnection in the minimal   number of steps, even when the effects of such a disconnection are small and hence may be erroneously interpreted as the effect of disturbances.
 This manuscript
 extends} the edge failure analysis
first performed in \cite{ECC2019} and then summarized in \cite{ICSTCC2019} {\color{black} (this 
latter paper  presents  some preliminary results about node disconnection and compares such results with those obtained in \cite{ECC2019} for the edge disconnection)}  in the following way. 
First of all in this paper we address the case of a directed communication graph, rather than an undirected one. 
{\color{black}  While Sections II and III represent slightly modified versions of the original Sections II and III in \cite{ECC2019}, the analysis in the following sections is sigificantly improved and extended compared to the one presented in \cite{ECC2019}. The proof of Proposition 3 is new. The whole part of subsection IV.A, after  Remark 6 (in particular, Proposition 7), and subsection IV.B (in particular, Proposition 8) are original. Since this last part deals with the problem of  providing a method to detect and identify the specific edge that got disconnected, we believe that there is a significant added value with respect to the original conference paper \cite{ECC2019}. Also, the proof of Proposition 9 in Section V is original, and the whole Subsection 
V.A, providing conditions for fault detection and identification of an edge disconnection when only a subset of the states is available, is original. Finally, Section VI provides a useful illustrative example and is new.}
We believe that the current results provide a first meaningful step  toward the final goal of first determining necessary and sufficient conditions for detecting and identifying all kinds of {\color{black} edge/mode faults,  for  general classes of homogeneous multi-agent systems, and then  designing practical algorithms to perform their}   detection and identification.

\smallskip

\noindent {\bf Notation.}
 ${\mathbb Z}_+$ and $\mathbb{R}_+$ denote the set of nonnegative integer and real numbers, respectively.    
Given $k, n\in {\mathbb Z}_+, k <n,$ we denote by $[k,n]$ the set of integers $\{k, k+1, \dots, n\}$.
We let $\mathbf{e}_i$ denote
the $i$-th element of the canonical basis in ${\mathbb R}^k$ ($k$ being clear from the context), 
with all entries equal to zero except for   the $i$-th one which is unitary.
 ${\bf 1}_k$ and ${\bf 0}_k$ 
 denote the $k$-dimensional real vector  
 whose entries are all $1$ or all $0$, respectively.
Given a  real matrix $A$,  the $(i,j)$-th entry of $A$ 
is denoted either by
$a_{ij}$ or by $[A]_{ij}$,      and its {\em transpose} by $A^\top$.
Given a vector ${\bf v}$, the $i$-th entry of ${\bf v}$ is denoted by $v_i$ or by $[{\bf v}]_i$.
   The \emph{spectrum} of
  $A\in {\mathbb R}^{n \times n}$, denoted by $\sigma(A)$, is the set of its eigenvalues and the \emph{spectral radius} of
  $A$, denoted by $\rho_A$, is the maximum modulus of the elements of $\sigma(A)$.  
For a {\em nonnegative matrix} $A\in {\mathbb R}_+^{n\times n}$, i.e., a
  matrix whose entries are nonnegative real numbers, the spectral
  radius is always an eigenvalue. 
  A nonnegative and nonzero matrix is called {\em positive}, while a matrix whose entries are all
    positive is called  {\em strictly positive}. {\em Nonnegative, positive and strictly positive vectors} are analogously defined. A positive matrix
 $A \in \mathbb{R}_+^{n\times n}, n>1,$ is {\em irreducible} if no permutation matrix $P$ can be found such that
 $$P^\top A P= \begin{bmatrix}A_{11} & A_{12}\cr 0 & A_{22}\end{bmatrix},$$
 where $A_{11}$ and $A_{22}$ are square {\color{black} (non-vacuous)} matrices.
  By the Perron-Frobenius theorem \cite{Berman-Plemmons,BookFarina,OlfatiFaxMurray}, for an irreducible positive matrix
 $A$ the spectral radius $\rho_A$ is a simple real 
 dominant eigenvalue,
 and the corresponding left and right eigenvectors are {\em strictly positive}. Positive eigenvectors of a positive irreducible matrix necessarily correspond to the spectral radius.

A {\em directed weighted graph}   ${\mathcal G}$ is a triple $({\mathcal V}, {\mathcal E}, {\color{black}{\mathcal W}})$, where ${\mathcal V}=[1,N]$ is the set of vertices, ${\mathcal E}\subseteq {\mathcal V} \times {\mathcal V}$ is the set of arcs, and ${\color{black}{\mathcal W}}$ is the matrix of the  weights of ${\mathcal G}$. 
${\color{black}{\mathcal W}}$ is called {\em adjacency matrix} of the graph. The $(i,j)$-th entry of ${\color{black}{\mathcal W}}$, $[{\color{black}{\mathcal W}}]_{ij}$, is nonzero if and only if the   arc $(j,i)$ belongs to ${\mathcal E}$.   
We assume that $[ {\color{black}{\mathcal W}}]_{ii} =0,$ for all $i\in [1,N]$,
namely there are no self-loops.  A {\em weighted graph}   ${\mathcal G}=({\mathcal V}, {\mathcal E}, {\color{black}{\mathcal W}})$ is {\em undirected } if the
arcs are bidirectional, namely  $(j,i)\in {\mathcal E}$ if and only if $(i,j)\in
{\mathcal E}$ and  $[ {\color{black}{\mathcal W}}]_{ij} =[ {\color{black}{\mathcal W}}]_{ji}.$ Therefore
 for undirected graphs    ${\color{black}{\mathcal W}}={\color{black}{\mathcal W}}^T$.

 A {\em path connecting $j$ and $i$} is an ordered
sequence of arcs $(j,i_1), (i_1,i_2), \dots, (i_{k-1},i_k),(i_k,i)\in
{\mathcal E}$. 
  A directed (resp. undirected) graph ${\mathcal G}$ is {\em
    strongly connected} ({\em connected}) if, for every
  pair of vertices $j$ and $i$, there is a path connecting them. 
${\mathcal G}$ is {\color{black} strongly connected  (connected)}   if and only if its {\color{black} (symmetric)} adjacency matrix ${\color{black}{\mathcal W}}$ is irreducible.
\newline The {\em Laplacian matrix}   \cite{Fiedler_consensus} associated with the adjacency matrix ${\color{black}{\mathcal W}}$ 
is
defined as
${\mathcal L} := {\mathcal C} - {\color{black}{\mathcal W}},
$
where ${\mathcal C}$ is the (diagonal)  connectivity matrix,
  whose diagonal entries are the sums  of the corresponding row entries of
 ${\color{black}{\mathcal W}}$,
namely 
{\color{black} $[{\mathcal C}]_{ii} = \sum_{j=1}^N [{\mathcal W}]_{ij},
  \forall i\in [1,N]$.}
Clearly, by the way the Laplacian has been defined ${\mathcal L} {\bf
  1}_N = {\bf 0}_N$.  Also, ${\mathcal L}$ is irreducible if and only if ${\color{black}{\mathcal W}}$ is irreducible.  If ${\mathcal G}$ is undirected
  then the associated Laplacian ${\mathcal L}$ is  symmetric.
 \\
 A family   $\pi=\{ {\mathcal V}_1, .., {\mathcal V}_k\}$ of 
non-empty subsets of  ${\mathcal V}$ such that $\cup_{i=1}^{k}{\mathcal
  V}_i={\mathcal V}$ and ${\mathcal V}_i \cap {\mathcal V}_j=\emptyset$, $\forall
i\neq j$, is called  a {\em partition} of the vertex set ${\mathcal V}$. 
When so,  
${\mathcal V}_i$ is called the $i$-th \emph{cell} of the partition $\pi$, and
the vector ${\mathbf
  x}\in \mathbb{R}^N$
satisfying {\color{black}$x_\ell=1$ if $\ell\in {\mathcal V}_i$ and $x_\ell=0$ if $\ell\notin {\mathcal V}_i$,  i.e., ${\bf x}=\sum_{\ell\in {\mathcal V}_i} {\bf e}_\ell$,}
 is called the \emph{characteristic vector} of the $i$-th cell ${\mathcal V}_i$. Finally, 
 the
 matrix $P_{\pi}\in\mathbb{R}^{N\times k}$,
 whose $i$-th column is the characteristic vector of the subset
 ${\mathcal V}_i$ 
is called the \emph{characteristic matrix} of $\pi$. 

%
\section {Problem setup}
\label{ProblemSetup}

Consider a multi-agent system consisting 
 of {\color{black}$N>2$} agents,  each of them  indexed in the integer
set $[1,N]$. The state of the $i$-th agent is described by the scalar  variable $x_i$ 
  that updates according to the following discrete-time linear state-space model
\cite{OF-Murray2004}:
$$x_i(t+1)=x_i(t)+v_i(t), \qquad t\in \mathbb{Z}_+,$$
where $v_i$ is 
the input   of 
the $i$-th agent. 
 The communication  among the $N$ agents is described
    by a     fixed directed graph ${\mathcal G}$ with 
adjacency matrix $ {\color{black}{\mathcal W}} 
\in {\mathbb R}^{N \times N}$.
The $(i,j)$-th entry, {\color{black} $i\ne j$,} of ${\color{black}{\mathcal W}}$ is positive, i.e.,   $[ {\color{black}{\mathcal W}}]_{ij}>0$, if there is information flowing from
agent $j$ to agent $i$, and  $[ {\color{black}{\mathcal W}}]_{ij}=0 $ otherwise. 
Each agent adopts the    (nearest
  neighbor linear) consensus protocol
\cite{OF-Murray2004}, 
which amounts to saying that the input 
 $v_i$ takes the   form:
\be
v_i(t)= \kappa \sum_{j=1}^N
 [ {\color{black}{\mathcal W}}]_{ij}(x_j(t)-x_i(t)),
\label{cons_protocol}
\ee
where $\kappa >0$ is a given real parameter known as {\em coupling
  strength}\footnote{{\color{black} If we regard the discrete-time system describing the $i$-th agent dynamics as the discretized version of the continuous-time equation $\dot x_i(t) =v_i(t)$,
    $\kappa$ represents the sampling time.}}. 
If we stack  the states of the agents in a 
single 
state vector   ${\bf x}\in\mathbb{R}^N$, the overall multi-agent system 
becomes
\be\label{sana}
{\bf x}(t+1)=(I_N -\kappa {\mathcal L}) {\bf x}(t) =: A {\bf x}(t),
\ee
where    
 ${\mathcal L} =[\ell_{ij}]  
  \in
 {\mathbb R}^{N \times N}$ is the 
Laplacian
 associated with the adjacency matrix $
 {\color{black}{\mathcal W}}$. 
It is easy to deduce the relationship
     between eigenvalues $\lambda_A\in\sigma(A)$ and $\lambda_{\mathcal L}\in\sigma({\mathcal
       L})$, namely \be\label{eigAvsL} \lambda_A=1-\kappa
     \lambda_{\mathcal L}. \ee 
     In particular,  ${\mathcal L} {\bf
  1}_N = {\bf 0}_N$ implies that $A {\bf 1}_N = {\bf 1}_N$.


\noindent System \eqref{sana} can be used to describe a wide
variety of 
{\color{black}situations where} each agent/node   shares information with its neighbours with the final goal  of converging to a
common  {\color{black} decision.}
If so, we refer to the multi-agent system as to a consensus network.
More formally,
system \eqref{sana} is a {\em consensus network} if for every   initial state ${\bf x}(0)$ there exists
$\alpha\in\mathbb{R}$ such that
  \be
 \lim_{t\to +\infty} {\bf x}(t)=  \alpha \mathbf{1}_N. \label{leaderless}
 \ee
 The constant
 $\alpha$ is called the \emph{consensus value} 
 \cite{OlfatiFaxMurray} for system \eqref{sana}, corresponding to the given initial state. 
If the agents' communication graph is
  strongly connected, namely the Laplacian ${\mathcal L}$ is irreducible \cite{OlfatiFaxMurray},    and  the coupling strength $\kappa$ satisfies
the following constraint:
\be
0 < \kappa < \frac{1}{\max_{i\in[1, N]} \ell_{ii}},
\label{constr_kappa}
\ee
$\ell_{ii}$ being the $i$-th diagonal entry of ${\mathcal L}$, 
system \eqref{sana}  is a consensus network  (see Theorem 2 in \cite{OlfatiFaxMurray}).
 Moreover, the consensus value
is   \be
\alpha =  {\bf w}_A^\top {\bf x}(0),
 \label{cons_value}
 \ee 
 where ${\bf w}_A$ is the left eigenvector of $A$ corresponding to $1$ and satisfying ${\bf w}_A^\top {\bf 1}_N=1$.
 In the special case when the graph is undirected and hence ${\mathcal L}$ and $A$ are  symmetric, ${\bf w}_A= \frac{1}{N}{\bf 1}_N$ and hence the consensus value
 is the average value of the  agents' initial conditions.
\newline {\bf Assumption 1.}  {\em In the following we steadily assume that 
  ${\mathcal L}$ is irreducible and $\kappa$ satisfies the inequalities in \eqref{constr_kappa}.
Consequently, $A=I_N-\kappa {\mathcal L}$ is a positive irreducible matrix.  Perron-Frobenius theorem and condition $A {\bf 1}_N = {\bf
  1}_N$ ensure  
  that $1$ is a simple dominant eigenvalue of $A$. The eigenspace associated with the unitary eigenvalue is $\langle {\bf 1}_N\rangle$, and the positive eigenvectors of $A$ necessarily correspond to $\lambda=1$ and hence   belong to $\langle {\bf 1}_N\rangle$.}
\smallskip

 In this paper we investigate the effects of {\color{black} an edge  
disconnection on a consensus  network, and the possibility of detecting  and identifying such a failure.  }

\section{Consensus after an edge disconnection}
\label{sec3}

If the communication from  agent $r$ to agent $h$ is interrupted,
namely the arc $(r,h), r\ne h,$ is disconnected, then the Laplacian
$\bar {\mathcal L}$ of the new digraph\footnote{In
  order not to make the notation heavy, in this part of the paper we
  denote by $\bar {\mathcal G}$ the new digraph, by
    $\bar{\mathcal L}$ its Laplacian and by $\bar A$ the new system
  matrix, without highlighting in the notation the specific link that
  gets disconnected. Later on, we will modify the notation to distinguish the effects of different edge disconnections.}  $\bar {\mathcal G}$
  is related to the Laplacian ${\mathcal L}=[\ell_{ij}]$ of the original digraph ${\mathcal G}$ by the relationship
$$\bar {\mathcal L} = {\mathcal L} + \ell_{hr} {\bf e}_h{\bf e}_h^\top - \ell_{hr} {\bf e}_h{\bf e}_r^\top = {\mathcal L} + \ell_{hr} {\bf e}_h [{\bf e}_h -{\bf e}_r]^\top,$$
where $\ell_{hr} = - [{\color{black}{\mathcal W}}]_{hr} < 0$.
Consequently,  the new  state-update matrix of the multi-agent system becomes
\be 
\bar A := I_N - \kappa \bar {\mathcal L} = 
A - \kappa \ell_{hr} {\bf e}_h  [{\bf e}_h -  {\bf e}_r]^\top.
\label{Abar}
\ee
In the specific case when the graph is undirected, the disconnection of the arc $(r,h)$ implies also the disconnection of the arc $(h,r)$. Consequently
\begin{eqnarray}
\bar {\mathcal L}  
&=& {\mathcal L} + \ell_{hr} [{\bf e}_h -{\bf e}_r][{\bf e}_h-{\bf e}_r]^\top,\nonumber \\
\bar A  &=& A - \kappa  \ell_{hr} [{\bf e}_h -{\bf e}_r][{\bf e}_h-{\bf e}_r]^\top.
\label{Abar_und}
\end{eqnarray}
 If the edge disconnection
  compromises the agents' mutual exchange of information,
 to the extent of destroying the graph connectivity, then
{\color{black} consensus will not be reached, and the effects of the fault on the network  will make fault detection
 eventually possible.
On the other hand, an edge disconnection that does not affect 
the graph connectivity, will allow the system to still reach consensus but on a different value from the original one, and the fault may hence go unnoticed and seriously affect the system functioning.}
 For these reasons,
  in this paper we will investigate the effects of a single edge
  disconnection by assuming that the strong connectedness of the
  communication graph is preserved after the failure. 
{\color{black} The following proposition, that makes use of some results derived in  \cite{OF-Murray2004} for consensus networks with switching topologies,  highlights that under this assumption we still have a consensus network, but in general the consensus value is preserved after the disconnection only if the communication graph is undirected.}

\begin{proposition} \label{uno}  
Let ${\mathcal L}$ be the Laplacian of a strongly connected graph ${\mathcal G}$, and 
{\color{black} set $A :=
 I_N-\kappa {\mathcal L}$, where $\kappa>0$ is a fixed coupling strength, that has been chosen in order to ensure that 
 system \eqref{sana} is a consensus network.
Let $\bar {\mathcal L}$ be the Laplacian of the graph $\bar {\mathcal G}$, obtained from ${\mathcal G}$ by removing the arc $(r,h)$,
 and set $\bar A:= I_N-\kappa \bar {\mathcal L}$.} 
 If $\bar {\mathcal G}$ is still strongly connected, then
 \begin{itemize}
 \item[i)] the system ${\bf x}(t+1)=(I_N -\kappa \bar {\mathcal L}) {\bf x}(t) = \bar A {\bf x}(t)$ is still a consensus network;
 \item[ii)]  if the graph ${\mathcal G}$  is undirected, then for every choice of ${\bf x}(0)$ and every time $\tau\ge 0$   at which the edge disconnection may occur, the new network converges to the same consensus value to which the original network would have converged before the disconnection;
 \item[iii)]  if the graph ${\mathcal G}$  is directed,  
  then
 for every $\tau\ge 0$ there are initial states ${\bf x}(0)$ corresponding to which  the consensus value obtained by the new network, after the  disconnection at $t=\tau$,  differs from the original one \eqref{cons_value}.
 \end{itemize}
\end{proposition}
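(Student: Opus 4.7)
The plan is to address the three items separately, exploiting \eqref{Abar}, Assumption~1 and the Perron--Frobenius theorem. For (i), I would check that $\bar A=I_N-\kappa\bar{\mathcal L}$ still satisfies the standing hypotheses: irreducibility of $\bar{\mathcal L}$ follows from the strong connectedness of $\bar{\mathcal G}$, and removing the arc $(r,h)$ alters only the $(h,h)$ diagonal entry of the Laplacian, with $\bar\ell_{hh}=\ell_{hh}+\ell_{hr}<\ell_{hh}$ because $\ell_{hr}<0$. Hence $\max_i\bar\ell_{ii}\le\max_i\ell_{ii}$, so $\kappa$ automatically satisfies \eqref{constr_kappa} relative to $\bar{\mathcal L}$, and the consensus property transfers by Theorem~2 of \cite{OlfatiFaxMurray} already invoked to justify Assumption~1.

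For (ii), when ${\mathcal G}$ is undirected both $A$ and $\bar A$ are symmetric, so their common Perron left eigenvector for eigenvalue $1$ equals $\tfrac{1}{N}{\bf 1}_N$ and ${\bf 1}_N^\top A={\bf 1}_N^\top$. Thus ${\bf 1}_N^\top{\bf x}(\tau)={\bf 1}_N^\top A^\tau{\bf x}(0)={\bf 1}_N^\top{\bf x}(0)$ for every $\tau\ge 0$, and the consensus value reached by the new network, namely $\tfrac{1}{N}{\bf 1}_N^\top{\bf x}(\tau)$, coincides with the original value in \eqref{cons_value} regardless of ${\bf x}(0)$ and $\tau$.

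The real work lies in (iii). The first step is to show ${\bf w}_{\bar A}\ne{\bf w}_A$ when ${\mathcal G}$ is directed: left-multiplying $\bar{\mathcal L}={\mathcal L}+\ell_{hr}{\bf e}_h[{\bf e}_h-{\bf e}_r]^\top$ by ${\bf w}_A^\top$ and using ${\bf w}_A^\top{\mathcal L}={\bf 0}_N^\top$, an equality ${\bf w}_{\bar A}={\bf w}_A$ would force $\ell_{hr}\,[{\bf w}_A]_h\,[{\bf e}_h-{\bf e}_r]^\top={\bf 0}_N^\top$, contradicting $h\ne r$, $\ell_{hr}<0$, and $[{\bf w}_A]_h>0$ (the Perron left eigenvector of an irreducible positive matrix is strictly positive). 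Then I would write the discrepancy between the two consensus values at disconnection time $\tau$ as $({\bf w}_{\bar A}^\top A^\tau-{\bf w}_A^\top){\bf x}(0)=({\bf w}_{\bar A}-{\bf w}_A)^\top A^\tau{\bf x}(0)$, using ${\bf w}_A^\top A^\tau={\bf w}_A^\top$. For $\tau=0$ the conclusion is immediate. For $\tau\ge 1$, setting ${\bf v}:={\bf w}_{\bar A}-{\bf w}_A\ne{\bf 0}_N$, I would exploit ${\bf v}^\top{\bf 1}_N=0$ to place ${\bf v}$ in the $A^\top$-invariant subspace ${\bf 1}_N^\perp$, on which $A^\top$ acts with spectrum $\sigma(A)\setminus\{1\}$: if $A$ is nonsingular (the generic situation, since $A$ fails to be invertible only when $1/\kappa\in\sigma({\mathcal L})$) this restriction is invertible, so ${\bf v}^\top A^\tau\ne{\bf 0}_N^\top$ and the set of ``bad'' initial states is a proper hyperplane.

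The main obstacle is precisely this last step: proving ${\bf v}^\top A^\tau\ne{\bf 0}_N^\top$ for every finite $\tau\ge 1$ when $0\in\sigma(A)$. To cover this pathological case one would decompose ${\bf v}$ along the left generalized eigenspaces of $A$, observe that only the components in the generalized eigenspace of $0$ can eventually be annihilated by $A^\tau$, and check via a short Jordan-form argument that the surviving components associated with the nonzero eigenvalues in $\sigma(A)\setminus\{1\}$ cannot cancel exactly at any finite $\tau$.
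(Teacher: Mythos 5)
Your route is genuinely different from the paper's: for parts ii) and iii) the authors simply invoke results of Olfati-Saber and Murray for switching-topology consensus, whereas you argue directly from the left Perron eigenvectors. Part i) coincides with the paper's argument (irreducibility of $\bar{\mathcal L}$ plus $\max_i\bar\ell_{ii}\le\max_i\ell_{ii}$). Part ii) is correct and clean. In part iii), the observation that ${\bf w}_A^\top\bar{\mathcal L}=\ell_{hr}[{\bf w}_A]_h[{\bf e}_h-{\bf e}_r]^\top\ne 0$, hence ${\bf w}_{\bar A}\ne{\bf w}_A$, is exactly the right first step and is more informative than the paper's citation; the reduction of the claim to $({\bf w}_{\bar A}-{\bf w}_A)^\top A^\tau\ne{\bf 0}_N^\top$ is also correct, and the case $A$ nonsingular is settled.

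However, the case $0\in\sigma(A)$ is a genuine gap, and the fix you sketch does not close it. You propose to decompose ${\bf v}:={\bf w}_{\bar A}-{\bf w}_A$ along the left generalized eigenspaces of $A$ and to argue that the ``surviving components associated with the nonzero eigenvalues cannot cancel at any finite $\tau$.'' That last claim is true but beside the point: the only dangerous scenario is precisely the one in which there are \emph{no} surviving components, i.e., ${\bf v}$ lies entirely in the left generalized eigenspace of $A$ associated with the eigenvalue $0$. In that case ${\bf v}^\top A^\tau={\bf 0}_N^\top$ for every $\tau$ at least equal to the nilpotency index of the corresponding Jordan blocks, and the consensus value would be preserved for \emph{every} ${\bf x}(0)$ at such $\tau$, contradicting iii). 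The only structural information you have about ${\bf v}$ is that ${\bf v}^\top{\bf 1}_N=0$, which places it in the sum of \emph{all} left generalized eigenspaces for eigenvalues different from $1$ -- this does not exclude the eigenspace of $0$. So either you must prove that ${\bf w}_{\bar A}-{\bf w}_A$ cannot lie in $\ker\bigl((A^\top)^N\bigr)$ (nothing in your argument, nor in the relation ${\bf w}_A^\top\bar{\mathcal L}=\ell_{hr}[{\bf w}_A]_h[{\bf e}_h-{\bf e}_r]^\top$, obviously yields this), or you must add a standing hypothesis such as $1/\kappa\notin\sigma({\mathcal L})$, i.e., $A$ nonsingular. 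As written, the proof of iii) is complete only for $\tau=0$ and for nonsingular $A$.
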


\begin{proof}
i) \ Since the original network is a consensus network, ${\mathcal L}$ is irreducible and $\kappa$ satisfies the constraint
\eqref{constr_kappa}. On the other hand, by assumption, the Laplacian $\bar {\mathcal L}$ is still irreducible and   if we denote by $\bar\ell_{ij}$ the $(i,j)$-th entry of $\bar {\mathcal L}$, then 
$\max_{i\in[1,N]}\bar\ell_{ii} \le \max_{i\in[1,N]} \ell_{ii},$ thus ensuring that
$$0 < \kappa < \frac{1}{\max_{i\in[1,N]} \bar \ell_{ii}}.$$
This is the case for both directed  and   undirected graphs. Consequently, also the new network is a consensus network.
\smallskip

\noindent {\color{black}ii) and iii) Follow from the results  obtained  in \cite{OF-Murray2004} (see for instance Theorems 4 and 9)
  for the consensus of continuous-time systems described as integrators and with switching communication topologies. }
\end{proof}
\smallskip



\section{Detecting an edge disconnection}
\label{sec4}

 In the rest of the paper we will focus on the case when the 
communication graph is directed, and
 investigate in detail under what conditions we can detect the edge disconnection. We will also steadily make the following  assumption.
 \smallskip
 
\noindent {\bf Assumption 2.}  {\em The graph $\bar{\mathcal G}$, describing the communication network after the link failure,
   is strongly connected, and hence $\bar A$ is still a positive irreducible matrix having $1$ as dominant eigenvalue and ${\bf 1}_N$ as dominant eigenvector.}
\smallskip

We distinguish the case when we can observe the states of all the agents and the case when we can observe the states of a subset of the agents.
We also adjust the definitions introduced in \cite{BattistelliTesi} (for the continuous-time case), 
  to keep into account that 
if the system has already reached consensus,   which is an equilibrium point for both the original network and the faulty one,  then every edge disconnection will not alter the consensus status, and hence will produce a fault that is necessarily undetectable. Consequently,
we introduce the following definitions.
\smallskip

\begin{definition} \label{def1}
Consider the multi-agent consensus network \eqref{sana}, and the network obtained from \eqref{sana} upon disconnection
of the edge from agent  $r$ to agent $h$:
\be\label{rotta}
{\bf x}(t+1)= \bar A {\bf x}(t),
\ee
with $\bar A$ described as in \eqref{Abar}.
The two networks are said to be {\em discernible} 
 if for every fault time $\tau \ge 0$ and every  state ${\bf x}(\tau) \not\in \langle {\bf 1}_N\rangle$, there exists $t> \tau$ such that the state trajectory
 of the faulty system \eqref{rotta} at time $t$,  ${\bf x}(t) = \bar A^{t-\tau}{\bf x}(\tau)$, is different from the state trajectory of the original system at time $t$.
 \newline 
 If only the states of $p< N$ agents are available, and we assume without loss of generality that they are the first $p$ agents,
 we say that the two networks are   {\em discernible from the observation of the first $p$ agents} 
 if for every fault time $\tau \ge 0$ and every  state ${\bf x}(\tau) \not\in \langle {\bf 1}_N\rangle$,  the first $p$ entries of any state trajectory
 of the faulty system \eqref{rotta} at time $t\ge \tau$,  ${\bf x}(t)
 = \bar A^{t-\tau}\bar{\bf x}_\tau$, are different from the first $p$
 entries of the state trajectory of the original system at time $t$
 for at least one time instant $t$, namely   for every $\bar {\bf
   x}_\tau\in {\mathbb R}^N$ there exists $t\ge \tau$ such that
\be
\begin{bmatrix}I_p &0\end{bmatrix} \bar A^{t-\tau}\bar {\bf x}_\tau
 \ne \begin{bmatrix}I_p &0\end{bmatrix}   A^{t-\tau}{\bf x}(\tau).
 \label{notdisc_p}\ee
 \end{definition}
 
{\color{black} 
\begin{remark}
If the edge fault is the outcome of an external attack, it is immediate to realise that the concept of discernibility is in perfect  agreement with the property of a cyberphysical system to not be subjected to undetectable attacks, explored  in \cite{pasqualetti2015control}.
Note that the concept of discernibility here adopted, and suitably adapted from the one given in \cite{BattistelliTesi}, is different from the concept of ``detectabilty" (of an edge) adopted in \cite{RahimianCDC2012} that only requires that the state trajectories of the healthy and the faulty systems differ for  at least one choice of the initial condition. \\
Finally, discernibility is introduced here as a system theoretic property with   exact definition and mathematical characterisation. It is clear that in a real-life environment, we need to account for  disturbances and modeling errors, that 
require to modify the previous theoretical condition \eqref{notdisc_p} to introduce a minimal threshold below which the disagreement of the measured output with respect to the expected one is not interpreted as the effect of a fault.
\end{remark}}

\subsection{Discernibility after edge disconnection}

In order to characterize discernibility, we may   exploit the analysis in \cite{BattistelliTesi}, that refers to the matrices 
 \be
 \Delta := \begin{bmatrix} A & 0\cr 0 & \bar A\end{bmatrix}
 = I_{2N} - \kappa \begin{bmatrix} {\mathcal L} & 0\cr 0 & \bar {\mathcal L}\end{bmatrix}, \quad
 \Gamma _N:=\begin{bmatrix}I_N & - I_N\end{bmatrix}.
 \label{coppia_oss}
 \ee
It is worth observing, however, that discernibility analysis in \cite{BattistelliTesi} is 
{\color{black} carried out for  homogeneous multi-agent systems, whose agents are generically described by the same  linear state-space model,
 and it is defined in such a way that it coincides with} the observability property of the pair $(\Delta, \Gamma_N)$.
Observability, however, is impossible to guarantee when the matrices $A$ and $\bar A$ are expressed in terms of the Laplacian as in \eqref{sana} and \eqref{Abar_und}, and this is quite reasonable since the lack of observability is related to the fact that if the disconnection happens when the network is already in its steady state, then the fault cannot be detected,  since the constant trajectory $\alpha {\bf 1}_N$ is compatible both with the original network and with the faulty one. 
We have   modified the two definitions of discernibility just to rule out this case, that  is unavoidable and cannot be regarded as a sign of bad performance. Clearly, this will lead to  different characterizations of the two discernibility properties.

We can now provide the following result, 
that adjusts and extends the one given in  Theorem 1   of \cite{BattistelliTesi}.
\smallskip

\begin{proposition} \label{classica}
Given the networks \eqref{sana} and \eqref{rotta}, this latter obtained from the former after the disconnection of the edge $(r,h)$, 
assume that Assumptions 1 and 2 hold. Then the following facts are equivalent:
\begin{itemize}
\item[i)]  the networks \eqref{sana} and \eqref{rotta} are discernible;
\item[ii)] the unobservable states of the pair $(\Delta, \Gamma_N)$ are those in $\langle {\bf 1}_{2N}\rangle$  and they   correspond to the unitary eigenvalue;
\item[iii)] the unobservable states of the pair $(A, [{\bf e}_r-{\bf e}_h]^\top)$  are those in $\langle {\bf 1}_{N}\rangle$  and they   correspond to the unitary eigenvalue;
\item[iv)]
there is no eigenvalue-eigenvector pair $(\lambda, {\bf v})$, 
with $\lambda\in {\mathbb C}$ and ${\bf v}\ne 0$,  except for $\lambda= 1$ and ${\bf v}\in \langle {\bf 1}_N\rangle$, such that
\be
A {\bf v} =\lambda {\bf v} \quad {\rm and}\quad [{\bf v}]_r=[{\bf v}]_h.
\label{cond_not_disc}
\ee
\item[v)]
there is no eigenvalue-eigenvector pair $(\lambda, {\bf v})$, 
with $\lambda\in {\mathbb C}$ and ${\bf v}\ne 0$,  common to $A$ and $\bar A$, except for $\lambda= 1$ and ${\bf v}\in \langle {\bf 1}_N\rangle$.

\end{itemize}
\end{proposition}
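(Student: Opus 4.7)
The plan is to prove the cycle by handling the direct algebraic equivalences first, (iv) $\Leftrightarrow$ (v) and (iii) $\Leftrightarrow$ (iv), and then identifying the three unobservability conditions (i), (ii), (iii) with one another. Throughout I would abbreviate $E := -\kappa\ell_{hr}\,{\bf e}_h[{\bf e}_h-{\bf e}_r]^\top$, so that $\bar A = A + E$ by \eqref{Abar}, and I would repeatedly invoke the elementary equivalence $E{\bf v}=0 \Leftrightarrow [{\bf e}_h-{\bf e}_r]^\top {\bf v}=0 \Leftrightarrow v_h=v_r$. Note also that $A{\bf 1}_N=\bar A{\bf 1}_N={\bf 1}_N$, which makes $(1, \alpha{\bf 1}_N)$ a common eigen-pair of $A$ and $\bar A$ for every $\alpha$, and accounts for the exception retained in all the statements.

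The equivalence (iv) $\Leftrightarrow$ (v) is immediate: a common eigen-pair $(\lambda,{\bf v})$ of $A$ and $\bar A$ satisfies $0=(\bar A - A){\bf v}=E{\bf v}$, hence $v_h=v_r$; conversely, if $A{\bf v}=\lambda{\bf v}$ and $v_h=v_r$, then $E{\bf v}=0$ and $\bar A{\bf v}=\lambda{\bf v}$. For (iii) $\Leftrightarrow$ (iv), I would invoke the PBH eigenvalue test for the pair $(A,[{\bf e}_r-{\bf e}_h]^\top)$: under Assumption 1, Perron-Frobenius guarantees that $\lambda=1$ is a simple dominant eigenvalue of $A$ with eigenspace $\langle{\bf 1}_N\rangle$, so it contributes no generalized eigenvectors outside $\langle{\bf 1}_N\rangle$; since $\langle{\bf 1}_N\rangle$ always lies in the unobservable subspace, the latter reduces to $\langle{\bf 1}_N\rangle$ precisely when no further eigenvector ${\bf v}$ of $A$ satisfies $[{\bf e}_r-{\bf e}_h]^\top{\bf v}=0$, which is exactly (iv). The eigenvalue-$1$ clause of (iii) is automatic from $A{\bf 1}_N={\bf 1}_N$.

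For (i) $\Leftrightarrow$ (ii) $\Leftrightarrow$ (iii), I would first recast discernibility (Definition~\ref{def1}) as the equality $\mathcal{S}:=\{{\bf x}_0\in{\mathbb R}^N:A^t{\bf x}_0=\bar A^t{\bf x}_0 \text{ for all } t\geq 0\}=\langle{\bf 1}_N\rangle$, where the inclusion $\supseteq$ holds because both $A$ and $\bar A$ fix ${\bf 1}_N$. The unobservable subspace of $(\Delta,\Gamma_N)$ consists of all $[{\bf u}^\top,{\bf v}^\top]^\top$ with $\Gamma_N \Delta^t [{\bf u}^\top,{\bf v}^\top]^\top = A^t{\bf u}-\bar A^t{\bf v}=0$ for every $t\geq 0$; the $t=0$ constraint forces ${\bf u}={\bf v}$, so this subspace equals the diagonal embedding of $\mathcal{S}$, giving (i) $\Leftrightarrow$ (ii) at once (the eigenvalue-$1$ clause follows because $\Delta[{\bf 1}_N^\top,{\bf 1}_N^\top]^\top=[{\bf 1}_N^\top,{\bf 1}_N^\top]^\top$). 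To match (ii) with (iii), I would show that $\mathcal{S}$ coincides with the unobservable subspace $\mathcal{T}$ of $(A,[{\bf e}_r-{\bf e}_h]^\top)$ through the telescoping identity
$$(A^{t+1}-\bar A^{t+1}){\bf u}\;=\;-E\,A^t{\bf u}\;+\;\bar A\bigl(A^t{\bf u}-\bar A^t{\bf u}\bigr),$$
from which a short induction on $t$ yields, in both directions, that $A^t{\bf u}=\bar A^t{\bf u}$ for all $t\geq 0$ is equivalent to $EA^t{\bf u}=0$ for all $t\geq 0$, i.e., to ${\bf u}\in\mathcal{T}$.

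The main obstacle is precisely this induction identifying $\mathcal{S}$ with $\mathcal{T}$: the propagations by $A$ and by $\bar A$ have to be swapped one step at a time through $\bar A=A+E$, and both directions of the equivalence must be extracted from the same telescoping relation. Once this identification is in place, everything else reduces either to the PBH test, which exploits the simplicity of the Perron eigenvalue of $A$, or to the immediate algebra relating $E$ to the scalar constraint $v_h=v_r$.
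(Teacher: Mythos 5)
Your proposal is correct, and it establishes the same chain of pairwise equivalences as the paper, but one link is handled by a genuinely different technique. The paper does all of its real work in the step ii) $\Leftrightarrow$ iii): it translates condition ii) into the PBH test for $(\Delta,\Gamma_N)$, uses the row $\begin{bmatrix}I_N & -I_N\end{bmatrix}$ to force ${\bf v}=\bar{\bf v}$, and then converts the pair of eigenvector equations $A{\bf v}=\lambda{\bf v}$, $\bar A{\bf v}=\lambda{\bf v}$ into the PBH test for $(A,[{\bf e}_h-{\bf e}_r]^\top)$ via $\bar A-A=E$; the step iii) $\Leftrightarrow$ iv) is then dismissed as obvious. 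You instead prove ii) $\Leftrightarrow$ iii) in the time domain, identifying the unobservable subspaces of the two pairs directly through the telescoping identity $(A^{t+1}-\bar A^{t+1}){\bf u}=-EA^t{\bf u}+\bar A(A^t{\bf u}-\bar A^t{\bf u})$ and an induction on $t$, and you relocate the PBH/spectral reasoning to iii) $\Leftrightarrow$ iv), where you correctly use the simplicity of the Perron eigenvalue to rule out unobservable generalized eigenvectors outside $\langle{\bf 1}_N\rangle$. Your route avoids any eigenvector computation in the subspace identification and makes the equality of the two unobservable subspaces explicit (which the PBH argument only gives eigenvalue by eigenvalue), at the cost of having to carry the invariant-subspace argument at the iii) $\Leftrightarrow$ iv) stage rather than getting it for free; the paper's route is shorter because the PBH formulation absorbs both steps at once. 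The remaining equivalences (i) $\Leftrightarrow$ (ii) via the diagonal embedding and (iv) $\Leftrightarrow$ (v) via $E{\bf v}=0\Leftrightarrow v_h=v_r$ coincide with the paper's arguments.
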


\begin{proof}  i) $\Leftrightarrow$ ii) \ Suppose that  the networks \eqref{sana} and \eqref{rotta} are not discernible. Then there exist  ${\bf x}(0)\not\in \langle{\bf 1}_N\rangle$  such that
$\bar A^{t}{\bf x}(0)=   A^{t}{\bf x}(0)$ for every $t\ge 0$. This is equivalent to saying that 
$\begin{bmatrix} {\bf x}(0)\cr {\bf x}(0)\end{bmatrix}$ is not observable for the pair $(\Delta, \Gamma_N)$  and   it does not belong to $\langle {\bf 1}_{2N}\rangle$.
\\
Conversely, suppose that there exists an unobservable state of the pair $(\Delta, \Gamma_N)$, ${\bf v}=\begin{bmatrix}{\bf v}_1\cr {\bf v}_2\end{bmatrix} \not \in \langle {\bf 1}_{2N}\rangle$.
Clearly, ${\bf v}_1$ must coincide with ${\bf v}_2$, and hence condition ${\bf v}\not \in \langle {\bf 1}_{2N}\rangle$ implies
${\bf v}_1\not\in \langle {\bf 1}_N\rangle$. This implies that if at some $\tau\ge 0$ the original network gets disconnected when ${\bf x}(\tau)={\bf v}_1$ then 
$\bar A^{t-\tau}{\bf x}(\tau)=   A^{t-\tau}{\bf x}(\tau)$ for every $t\ge \tau$, thus ruling out discernibility.
\newline
ii)\ $\Leftrightarrow$\ iii)\ Condition ii) is easily seen to be equivalent to the following condition, expressed in terms of PBH observabilty matrix:
if there exist $\lambda\in {\mathbb C}$ and $\begin{bmatrix} {\bf v}\cr \bar {\bf v}\end{bmatrix}\ne 0$ such that
\be
\begin{bmatrix} \lambda I_N - A & 0\cr
0 & \lambda I_N -\bar A\cr I_N & - I_N\end{bmatrix} \begin{bmatrix} {\bf v}\cr \bar {\bf v}\end{bmatrix}=0,
\label{T1}\ee
then $\lambda=1$ and $\begin{bmatrix} {\bf v}\cr \bar {\bf v}\end{bmatrix}\in \langle 1_{2N}\rangle$.
Similarly, condition iii) is equivalent to saying that 
if  $\lambda\in {\mathbb C}$ and  ${\bf v} \ne 0$ exist such that
\be
\begin{bmatrix} \lambda I_N - A \cr
 [{\bf e}_h - {\bf e}_r]^\top  \end{bmatrix}   {\bf v}=0, 
\label{T3}\ee
then  $\lambda=1$ and ${\bf v} \in \langle 1_{N}\rangle$.
On the other hand, it is easily seen that \eqref{T1} is equivalent to
\be
\begin{bmatrix} \lambda I_N - A & 0\cr
0 & \lambda I_N -\bar A \end{bmatrix} \begin{bmatrix} {\bf v}\cr   {\bf v}\end{bmatrix}=0,
\label{T2}\ee
namely to 
\begin{eqnarray*}
A {\bf v} &=& \lambda {\bf v},\cr
\bar A {\bf v} &=& \lambda {\bf v},
\end{eqnarray*}
and due to the relation between $A$ and $\bar A$,  the previous two identities are, in turn, equivalent to
\begin{eqnarray*}
A {\bf v} &=& \lambda {\bf v},\cr
[{\bf e}_h - {\bf e}_r]^\top {\bf v} &=& 0,
\end{eqnarray*}
that can be expressed in terms of the PBH observability criterion as in \eqref{T3}.
This proves  ii) $\Leftrightarrow$ iii).

\noindent iii)\ $\Leftrightarrow$ \ iv)\ Obvious.

\noindent iv)\ $\Leftrightarrow$ \ v)\ It is easily seen that   \eqref{cond_not_disc}
holds if and only if 
$A{\bf v}=\lambda {\bf v} = \bar A {\bf v}.$
So, the equivalence immediately follows.
\end{proof}

\begin{remark}  
Conditions ii),  iii) and iv) in Proposition \ref{classica} could be equivalently expressed 
in terms of the matrices ${\mathcal L}$ and $\bar {\mathcal L}$ (instead of $A$ and $\bar A$), and by replacing  $\lambda_A=1$ with $\lambda_{\mathcal L} =0$.
\end{remark}


  In the following, we consider some special cases of matrices $A$ (or, equivalently, graph Laplacians ${\mathcal L}$)
   for which condition iv) in Proposition \ref{classica} is violated.  These situations rule out in advance discernibility.
 
%

 If there exists $\lambda\in \sigma(A), \lambda\ne 1,$ of geometric multiplicity greater than $1$,
then an eigenvector of $A$ corresponding to $\lambda$ can be found such that condition \eqref{cond_not_disc} is  satisfied thus making the old network and the new network not discernible.
So, a necessary condition for discernibility is that {\color{black}  all the eigenvalues of $A$ have} unitary  geometric multiplicity ($A$
{\color{black} is} cyclic  \cite{SontagBook}
or, equivalently, non-derogatory \cite{Cullen}).

\begin{lemma} \label{lemma4}
If  $A$ has an eigenvalue $\lambda\ne 1$ of geometric multiplicity greater than $1$,
then  there exists an eigenvector ${\bf v}$ corresponding to $\lambda$ such that condition \eqref{cond_not_disc} holds.
\end{lemma}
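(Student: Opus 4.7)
The plan is to use a simple dimension-counting argument on the eigenspace of $\lambda$. Let $E_\lambda := \ker(\lambda I_N - A)$ denote the eigenspace of $A$ associated with $\lambda$. By hypothesis, $\dim E_\lambda \ge 2$.

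Condition \eqref{cond_not_disc} requires finding a nonzero vector ${\bf v} \in E_\lambda$ such that $[{\bf e}_h - {\bf e}_r]^\top {\bf v} = 0$. So I would consider the linear functional
$$\phi : E_\lambda \to {\mathbb C}, \qquad \phi({\bf v}) := [{\bf e}_h - {\bf e}_r]^\top {\bf v},$$
whose image has dimension at most $1$. By the rank-nullity theorem,
$$\dim \ker \phi \ge \dim E_\lambda - 1 \ge 1,$$
and hence $\ker \phi$ contains a nonzero vector ${\bf v}$. This ${\bf v}$ is, by construction, an eigenvector of $A$ corresponding to $\lambda$ and satisfies $[{\bf v}]_r = [{\bf v}]_h$, so \eqref{cond_not_disc} holds.

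I would also briefly note that the restriction $\lambda \ne 1$ in the statement guarantees that ${\bf v} \notin \langle {\bf 1}_N\rangle$ automatically (since ${\bf 1}_N$ is the Perron eigenvector associated with the simple eigenvalue $1$ of $A$), so the pair $(\lambda, {\bf v})$ genuinely falls in the class of pairs forbidden by condition iv) of Proposition \ref{classica}. The argument is essentially immediate once the dimension count is in place; there is no real obstacle, the lemma is a structural observation that motivates the cyclicity (non-derogatory) assumption needed for discernibility.
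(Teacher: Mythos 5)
Your proof is correct and is essentially the paper's argument: the paper also exploits that the eigenspace has dimension at least two, but does so concretely by taking two independent eigenvectors $\mathbf{v}_1,\mathbf{v}_2$ and solving a $2\times 2$ linear system for a combination $a_1\mathbf{v}_1+a_2\mathbf{v}_2$ whose $r$-th and $h$-th entries coincide, which is exactly your rank--nullity computation for the functional $\mathbf{v}\mapsto[\mathbf{e}_h-\mathbf{e}_r]^\top\mathbf{v}$ restricted to the eigenspace. Your packaging is cleaner, and your closing remark that $\lambda\ne 1$ keeps $\mathbf{v}$ out of $\langle\mathbf{1}_N\rangle$ correctly ties the conclusion to condition iv) of Proposition \ref{classica}.
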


\begin{proof} If $\lambda\in \sigma(A), \lambda \ne 1,$ has geometric multiplicity  greater than $1$, 
then there exist $2$ linearly independent eigenvectors, say  ${\bf v}_1$ and ${\bf v}_2$, corresponding to $\lambda$.
Suppose that neither of these eigenvectors has the $r$-th and the $h$-th entries that coincide.
Introduce the  $2\times 2$ matrix 
$$M_{r,h} :=\begin{bmatrix} [{\bf v}_1]_r & [{\bf v}_2]_r\cr
[{\bf v}_1]_h & [{\bf v}_2]_h \end{bmatrix}.$$
If $M_{r,h}$ 
is nonsingular,
  there exist $a_1,a_2\in {\mathbb R}\setminus \{0\}$ such that
$$\begin{bmatrix}1 \cr 1\end{bmatrix} = M_{r,h} \begin{bmatrix} a_1\cr a_2\end{bmatrix}.$$
If $M_{r,h}$ is singular,   there exist  $a_1,a_2\in {\mathbb R}\setminus \{0\}$ such that
$$\begin{bmatrix}0 \cr 0\end{bmatrix} = M_{r,h} \begin{bmatrix} a_1\cr a_2\end{bmatrix}.$$
In both cases the eigenvector ${\bf v} := a_1 {\bf v}_1 + a_2 {\bf v}_2$ satisfies 
$[{\bf v}]_r=[{\bf v}]_h.$
%
%
 \end{proof}
\smallskip

\begin{remark} If the graph ${\mathcal G}$ is undirected,
  $A$ is symmetric and hence diagonalizable. 
  Therefore algebraic multiplicities and geometric multiplicities coincide. So, a necessary condition for discernibility is that  the $N$ eigenvalues of $A$ are all distinct.
 \end{remark}

We now further explore condition \eqref{cond_not_disc} of
  Proposition \ref{classica} and   connect it to a topological condition on ${\mathcal
    G}$. To this end, we need to introduce the concept of nontrivial almost equitable
  partition for a directed
  weighted graph   ${\mathcal G}$, by extending the analogous notion given for undirected unweighted graphs in
 \cite{egerstedt2012interacting}.
  Given a directed weighted graph ${\mathcal G}=({\mathcal V},{\mathcal E}, {\color{black}{\mathcal W}})$, a
partition $\pi=\{ {\mathcal V}_1, .., {\mathcal V}_k\}$ of the set of
vertices ${\mathcal V} = [1,N]$ is said to 
be an   \emph{equitable partition} for ${\mathcal G}$ if 
for every pair of cells ${\mathcal V}_i, {\mathcal V}_j, i,j\in [1,k]$,  and every node $v$ of   ${\mathcal V}_i$,
the sum of the weights of all
 edges   from the  nodes in ${\mathcal V}_j$  
to the node $v$ 
is a constant value that depends only on $i$ and $j$, not on $v$. The partition $\pi$ is said to be an 
\emph{almost}
 (or   \emph{relaxed}) {\em equitable partition}
 if the above condition holds for every pair $(i,j), i,j\in [1,k]$, with
$j \neq i$. 
  
 In formal terms, 
a partition $\pi$ is an almost equitable partition for  the directed weighted graph ${\mathcal G}$ 
if for every ${\mathcal V}_i, {\mathcal V}_j, i,j\in [1,k], i\ne j$,  and every pair of nodes $v_1,v_2$ of   ${\mathcal V}_i$,
$$
\sum_{u\in {\mathcal V}_j} [{\color{black}{\mathcal W}}]_{v_1u} = \sum_{u\in {\mathcal V}_j} [{\color{black}{\mathcal W}}]_{v_2u},$$
or, equivalently, by using the Laplacian entries
\be\label{epcondition} 
\sum_{u\in {\mathcal V}_j} \ell_{v_1u} = \sum_{u\in {\mathcal V}_j} \ell_{v_2u}. \ee 
This amounts to saying that $\sum_{u\in {\mathcal
    V}_j}\ell_{vu}=d_{ij} $ for every $v\in {\mathcal V}_i$ and every $j\ne i$. Clearly, $d_{ij} <0$.
Note that  for any directed weighted graph ${\mathcal G}$,  two trivial almost equitable partitions always exist, namely (1) 
the one corresponding to $k=1$ and ${\mathcal V}_1={\mathcal V}$,
and (2) the one corresponding to  $k=N$ and each ${\mathcal V}_i$ consisting of a single distinct node.
In the following, when talking about almost equitable partitions, we will always rule out the two trivial ones.

The fact that   $\sum_{u\in {\mathcal V}_j}\ell_{vu}=d_{ij}$ for any $v\in {\mathcal V}_i$ and $j\ne i$ suggests that it
is possible to define (an adjacency matrix and hence) a Laplacian ${\mathcal L}_\pi\in{\mathbb R}^{k\times k}$ 
for ${\mathcal G}$, associated with $\pi$,
as follows  \cite{cardoso2007laplacian}:
\be\label{Lpi} 
[{\mathcal L}_\pi]_{ij}=\left\{
  \begin{array}{ccc}
    d_{ij}, &  {\rm if } & i \neq j; \\
 -\sum_{h\in[1,k] \atop h\neq
  i} d_{ih}, &  {\rm if } &  i = j.
  \end{array}\right.
\ee
We now provide the following result that extends the analogous one
 for
 undirected
 unweighted  graphs derived in  \cite{cardoso2007laplacian}.
%

\begin{proposition}
\label{PropEP}
Given a directed weighted graph ${\mathcal G}=({\mathcal V},{\mathcal E}, {\color{black}{\mathcal W}})$, let
$\pi=\{ {\mathcal V}_1, .., {\mathcal V}_k\}$ be a partition  of the set of vertices ${\mathcal V}$
 and let  $P_\pi$ be the characteristic matrix of $\pi$ (see Notation in Section I). If
$\pi$ is a (nontrivial) almost equitable partition, then:
\begin{itemize}
\item[i)]  ${\mathcal L}P_{\pi}=P_{\pi}{\mathcal L}_\pi;$
\item[ii)] the spectra of ${\mathcal L}_{\pi}$  and  ${\mathcal L}$
  satisfy  $\sigma({\mathcal L}_\pi)\subset \sigma({\mathcal L})$ and
  the associated eigenvectors are related as follows \be  {\mathbf u}\in  {\rm
    ker}(\lambda I_k -{\mathcal L}_\pi) \Rightarrow P _\pi   {\mathbf u}\in  {\rm
    ker}(\lambda I_N-{\mathcal L}); \ee
\item[iii)] $ \forall \lambda\in \sigma({\mathcal L}_{\pi})\subset \sigma({\mathcal L})$, $\exists
  {\mathbf v}\in  {\rm
    ker}(\lambda I_N - {\mathcal L})$ such that $\forall j\in[1,k]$
\be
[ {\mathbf v}]_r=[ {\mathbf v}]_s \qquad \forall r,s\in {\mathcal V}_j.
\label{equalcomp}
\ee
\end{itemize}
\end{proposition}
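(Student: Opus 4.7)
The plan is to first establish the intertwining identity in i) by a direct entrywise computation, from which ii) and iii) follow almost immediately.

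For i), I would compute the $(v,j)$-th entry of both sides for every $v\in {\mathcal V}$ and $j\in[1,k]$. Since the $j$-th column of $P_\pi$ is the characteristic vector of ${\mathcal V}_j$, the left-hand side gives $[{\mathcal L}P_\pi]_{vj}=\sum_{u\in {\mathcal V}_j}\ell_{vu}$. Let $i$ be the unique index with $v\in {\mathcal V}_i$. When $i\neq j$, the almost equitable partition property \eqref{epcondition} collapses this sum to $d_{ij}$. When $i=j$, I would use the Laplacian row-sum property ${\mathcal L}{\bf 1}_N={\bf 0}_N$ to write $\sum_{u\in {\mathcal V}_i}\ell_{vu}=-\sum_{h\neq i}\sum_{u\in {\mathcal V}_h}\ell_{vu}=-\sum_{h\neq i}d_{ih}$. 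On the other hand $[P_\pi{\mathcal L}_\pi]_{vj}=[{\mathcal L}_\pi]_{ij}$, because the $v$-th row of $P_\pi$ simply selects the $i$-th row of ${\mathcal L}_\pi$. Comparing with the definition \eqref{Lpi} of ${\mathcal L}_\pi$ yields equality in both cases and hence ${\mathcal L}P_\pi=P_\pi{\mathcal L}_\pi$.

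For ii), given any eigenpair $({\mathbf u},\lambda)$ of ${\mathcal L}_\pi$, the identity of i) immediately gives ${\mathcal L}(P_\pi{\mathbf u})=P_\pi{\mathcal L}_\pi{\mathbf u}=\lambda(P_\pi{\mathbf u})$. Since the columns of $P_\pi$ are characteristic vectors of disjoint nonempty cells, $P_\pi$ has full column rank, so $P_\pi{\mathbf u}\neq 0$ whenever ${\mathbf u}\neq 0$. This proves both the spectral inclusion $\sigma({\mathcal L}_\pi)\subset\sigma({\mathcal L})$ and the claimed lifting of eigenvectors.

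For iii), it suffices to unpack the construction in ii): if ${\mathbf v}:=P_\pi{\mathbf u}$, then for every $v\in {\mathcal V}_j$ one has $[{\mathbf v}]_v=u_j$, so ${\mathbf v}$ is constant on each cell, which is exactly \eqref{equalcomp}. The real work is therefore confined to part i), and the main (mild) obstacle is bookkeeping the two sub-cases $i=j$ and $i\neq j$ correctly; directedness and edge weights introduce no additional difficulty relative to the undirected unweighted argument of \cite{cardoso2007laplacian}, because they are handled entirely by the row-sum property of ${\mathcal L}$ together with \eqref{epcondition}.
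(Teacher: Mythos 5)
Your proposal is correct and follows essentially the same route as the paper: part i) is verified by direct computation using the almost-equitable-partition condition for off-diagonal cells and the zero row-sum of ${\mathcal L}$ for the diagonal case (the paper does this blockwise after reordering the vertices, you do it entrywise, which is the same argument), and ii) and iii) then follow from the intertwining relation exactly as in the paper. Your explicit remark that $P_\pi$ has full column rank, so $P_\pi{\mathbf u}\neq 0$, fills in a small detail the paper leaves implicit.
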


\begin{proof}  i)\ Set $n_i:= |{\mathcal V}_i|, i\in [1,k].$ It entails no loss of generality assuming 
that ${\mathcal V}_1=[1, n_1]$ and ${\mathcal V}_i=[\sum_{h=1}^{i-1} n_h + 1, \sum_{h=1}^{i-1} n_h + n_i]$ for $i\in [2,k]$. 
Consequently,
$$P_{\pi} = \begin{bmatrix} {\bf 1}_{n_1} & & & \cr
& {\bf 1}_{n_2} & & \cr
& & \ddots & \cr &&& {\bf 1}_{n_k}\end{bmatrix}.$$
It is easily seen that
{\footnotesize$${\mathcal L}P_{\pi}=
\begin{bmatrix}
- \sum_{j\ne 1} d_{1j} {\bf 1}_{n_1}& d_{12} {\bf 1}_{n_1}& \dots & d_{1k}{\bf 1}_{n_1}\cr
d_{21} {\bf 1}_{n_2} & - \sum_{j\ne 2} d_{2j} {\bf 1}_{n_2}& \dots & d_{2k} {\bf 1}_{n_2} \cr 
\vdots & \vdots& \ddots & \vdots \cr
d_{k1} {\bf 1}_{n_k}& d_{k2} {\bf 1}_{n_k}& \dots & - \sum_{j\ne k} d_{kj}{\bf 1}_{n_k} \end{bmatrix}
$$}
where we used the fact that 
if $v\in {\mathcal V}_i$ then 
$0 = {\bf e}_v^\top {\mathcal L} {\bf 1}_N = \sum_{u\in {\mathcal V}_i} \ell_{vu} + \sum_{j\ne i} d_{ij}.$
It is immediate then to see that 
 i) holds.

\noindent ii)\ Let $\lambda$ be  arbitrary in $\sigma({\mathcal L}_\pi)$. If ${\bf u}$ is an eigenvector of ${\mathcal L}_\pi$ corresponding to $\lambda$, then
${\mathcal L}_{\pi} {\bf u} = \lambda {\bf u}$.  So, by making use of point i),
we get
${\mathcal L} P_\pi  {\bf u} = P_\pi {\mathcal L}_{\pi} {\bf u} = \lambda P_\pi {\bf u}$. This shows that
$\lambda \in \sigma({\mathcal L})$ (and hence 
$\sigma({\mathcal L}_\pi)\subset \sigma({\mathcal L})$),
and that $P_\pi  {\bf u}$ is an eigenvector of ${\mathcal L}$ corresponding to $\lambda$.

\noindent iii)\ By point ii) it is immediate to see that for every eigenvalue  $\lambda$ of ${\mathcal L}_\pi$  
there exists an eigenvector ${\bf v}$ of ${\mathcal L}$ corresponding to $\lambda$, taking the form
${\bf v}= P_\pi {\bf u}$. Such an eigenvector clearly satisfies 
\eqref{equalcomp}.
\end{proof}
%


To highlight the impact of this condition on our
objectives, consider Fig.1. As a consequence of Proposition
\ref{PropEP},   the multi-agent system  whose communication graph is depicted in Fig. 1 is
subject to undetectable  edge failures, which are  highlighted  in red. 
Indeed,  according to Proposition \ref{classica} point iv),  the failure of  every link $(r,h)$ such that the matrix $A$  has an eigenvector 
(corresponding to some non-unitary eigenvalue) whose $r$-th and $h$-th
entries  coincide is not  detectable because it produces
a faulty network which is not discernible from the original one.
 If $r$ and $h$ belong to the same $i$-th cell of an almost equitable
partition then it follows directly  from Proposition \ref{PropEP} that
for every non-unitary eigenvalue of $A$,  related  through
\eqref{eigAvsL}   to a nonzero eigenvalue of
$\mathcal{L}_\pi$, there exists an
eigenvector of $A$ whose $r$-th and $h$-th entries coincide.

Some of these undetectable
links are critical and their failure may significantly change the
network structure, as for example egde $(1,2)$ whose failure affects the
network strong connectivity.

\begin{figure}\label{EP}
\begin{center}
\hspace{-1cm}
\includegraphics[scale=0.32]{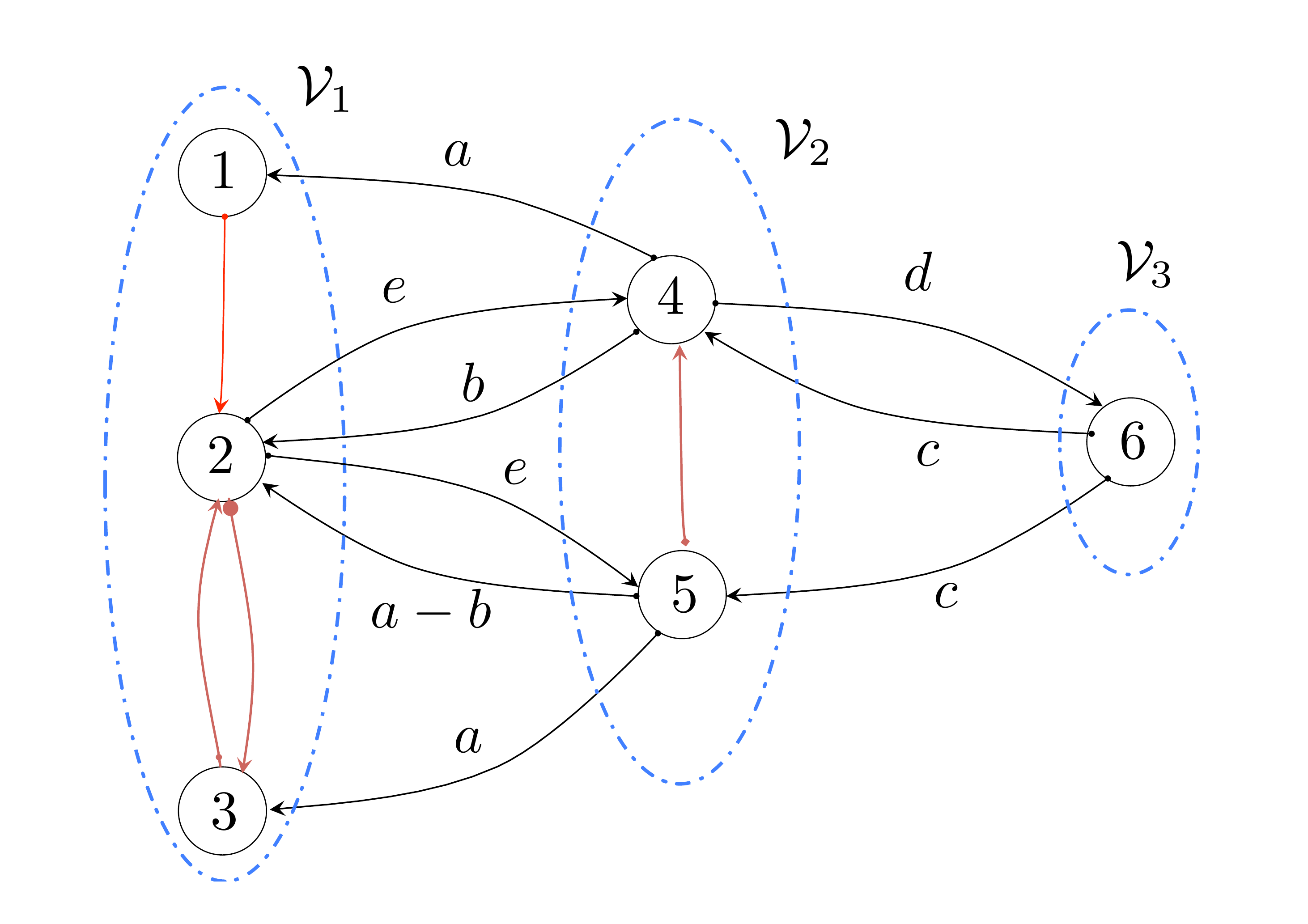}%
\caption{An example of a graph with a nontrivial equitable partition. Failures of red edges
  are undetectable.}
\end{center}
\end{figure}


\subsection{How to detect and identify an edge disconnection when the whole state vector is available}

If the states of all the  agents are available, it is possible to easily implement
a residual based fault detection scheme to detect the   disconnection of the edge $(r,h)$. 
Specifically, 
assume, first,  for the sake of simplicity that all the eigenvalues of $A$ are real and that the Jordan form of $A$ is
 \be\label{Ja} J_A = \begin{bmatrix} 1& 0\cr 0 & \tilde J_A\end{bmatrix} = 
\begin{bmatrix} 1& 0 & \dots & 0 \cr 
0 & J_2   & \dots & 0\cr
\vdots & \vdots& \ddots &\vdots\cr 0 & 0 & \dots  & J_n
\end{bmatrix}, \ee 
where {\color{black} $J_i, i\in [2,n],$ is an elementary Jordan block (of size $k_{i, max}$) corresponding to the eigenvalue $\lambda_i$, with $|\lambda_i|< 1$.
 Note that we do not assume that $\lambda_i\ne \lambda_j$ for $i\ne j$, but each $\lambda_i$ corresponds to a specific chain of generalised eigenvectors ${\bf v}_i^{(k)}, k\in [1, k_{i, max}]$, where  ${\bf v}_i^{(k)},$ is a generalised eigenvector of order $k$ of  $A$ corresponding to  $\lambda_i$.  }
Let $T\in {\mathbb R}^{N\times N}$ be the nonsingular transformation matrix with columns
\be
T =\begin{bmatrix} {\bf 1}_N & {\bf v}_2^{(1)} & {\bf v}_2^{(2)} & \dots & {\bf v}_2^{(k_{2,max})} & \dots & {\bf v}_n^{(k_{n,max})}
\end{bmatrix}.
\label{defT}\ee
Then
$T^{-1} A T = J_A$ and hence $T^{-1} A  = J_A T^{-1}$.  Set
\be
W := \begin{bmatrix} {\bf 0}_{N-1} & I_{N-1}\end{bmatrix} T^{-1}
 \in   {\mathbb R}^{(N-1)\times N}.
\label{defW}
\ee
  Considering \eqref{Ja} and \eqref{defW},  it is easy to verify that $W  A - \tilde J_A W=0$.
Consequently, as far as the system is correctly functioning, namely  ${\bf x}(t) = A {\bf x}(t-1)$, then the residual signal
$${\bf r}(t) = W {\bf x}(t) - \tilde J_A W {\bf x}(t-1), \quad t\ge 1,$$
is identically zero.  In the case when the matrix $A$ has also complex conjugate
  eigenvalues, then we can follow the same procedure and reasoning as
  above, by replacing the Jordan form with the real Jordan form, and
  hence pairing together pairs of complex conjugate eigenvalues and
  replacing pairs of complex generalised eigenvectors of some order
  with equivalent pairs of real generalised eigenvectors of the same
  order (see e.g.  \cite{HornJohnson}, Section 3.4.1). The details are a little bit more involved from a notational viewpoint, but the substance of the result does not change. For this reason, we omit the details.

We want to show, now, that under the discernibility assumption, unless ${\bf x}(\tau) \in \langle {\bf 1}_N\rangle$, namely the disconnection 
of the edge $(r,h)$ takes place at a time $t=\tau$ when the multi-agent system has already reached consensus, then it is not possible that   ${\bf r}(t)$ is identically zero for $t> \tau$.
This allows one to detect the disconnection of the edge $(r,h)$. Additionally, we   propose conditions that ensure that the edge disconnection is not only detected but also identified, namely the specific broken edge can be identified from the sequence of residual vectors.
To this end, it is convenient to replace the original notation $\bar A$ for the state-space matrix after disconnection with a more specific notation that indicates which specific edge got disconnected. Accordingly, we introduce the following notation:
{$$
\bar A_{ij} := A - \kappa \ell_{ij} {\bf e}_i[{\bf e}_i-{\bf e}_j]^\top,$$
 which is the state matrix of the system once the
edge $(j,i)$ gets disconnected for some $i,j\in [1,N], i\ne j$, (in particular, for $(j,i)=(r,h)$).
\smallskip

\begin{proposition}\label{faultdet_works}
Consider the networks \eqref{sana} and \eqref{rotta}, this latter
obtained from the former after the disconnection of the edge $(r,h)$
at some time $t=\tau$. If Assumptions 1-2 hold, and the networks \eqref{sana} and \eqref{rotta} are discernible (i.e., one of the equivalent conditions of Proposition \ref{classica} holds), then, unless ${\bf x}(\tau)\in \langle {\bf 1}_N\rangle$, namely the network has already reached consensus, there exists $t\in [\tau+1,\tau+N]$ such that ${\bf r}(t)\ne 0$.
\\
Moreover, 
if for every $j\in [1,N]\setminus\{r,h\}$ (and not only for $j=r$), the following conditions hold:\\
(i)\ the faulty network obtained by disconnecting  $(j,h)$ is still strongly connected and discernible   from the original network, \\
(ii)\ 
\be
\sigma(\bar A_{hr})\cap \sigma(\bar A_{hj})=\{1\},\label{disj_spectra}
\ee
then 
it is possible to identify  from the residual signal the edge $(r,h)$ that got disconnected.
  \end{proposition}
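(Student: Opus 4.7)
My plan is to handle the detection and identification claims separately, both relying on an explicit formula for the residual. Since by construction $WA=\tilde J_A W$, the residual vanishes while the dynamics are governed by $A$, and for $t>\tau$ it reduces to
\[
{\bf r}(t)=W(\bar A_{hr}-A){\bf x}(t-1)=-\kappa\ell_{hr}(W{\bf e}_h)\bigl([{\bf e}_h-{\bf e}_r]^\top {\bf x}(t-1)\bigr).
\]
Moreover, since the first column of $T$ is ${\bf 1}_N$, one has $T^{-1}{\bf 1}_N={\bf e}_1$ and hence $\ker W=\langle{\bf 1}_N\rangle$; in particular $W{\bf e}_h\ne 0$ for every $h\in[1,N]$.

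For detection, I would assume by contradiction that ${\bf r}(t)=0$ for every $t\in[\tau+1,\tau+N]$. This forces $[{\bf e}_h-{\bf e}_r]^\top {\bf x}(t)=0$ for $t=\tau,\dots,\tau+N-1$. The key observation is that whenever $[{\bf e}_h-{\bf e}_r]^\top {\bf x}(t)=0$, the faulty and healthy updates coincide on that state, so by induction ${\bf x}(\tau+k)=A^k{\bf x}(\tau)$ and $[{\bf e}_h-{\bf e}_r]^\top A^k{\bf x}(\tau)=0$ for $k=0,\dots,N-1$. By Cayley-Hamilton, ${\bf x}(\tau)$ then lies in the unobservable subspace of the pair $(A,[{\bf e}_h-{\bf e}_r]^\top)$, which by condition iii) of Proposition \ref{classica} collapses to $\langle{\bf 1}_N\rangle$, contradicting the hypothesis ${\bf x}(\tau)\notin\langle{\bf 1}_N\rangle$.

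For identification, I would first recover $h$ from the direction of any nonzero ${\bf r}(t)$: the columns $\{W{\bf e}_i\}_{i=1}^N$ are pairwise non-parallel, because any nontrivial relation $\alpha{\bf e}_i+\beta{\bf e}_j\in\ker W=\langle{\bf 1}_N\rangle$ with $i\ne j$ is impossible for $N>2$. Having identified $h$, the remaining task is to rule out every alternative candidate $(j,h)$ with $j\in[1,N]\setminus\{r,h\}$. To this end I introduce the indistinguishability subspace
\[
{\cal N}_{r,j}:=\bigl\{{\bf x}\in\mathbb{R}^N\,:\,\bar A_{hr}^k{\bf x}=\bar A_{hj}^k{\bf x},\ \forall\,k\ge 0\bigr\},
\]
and note that hypothesis $(j,h)$ reproduces the observed trajectory precisely when ${\bf x}(\tau)\in{\cal N}_{r,j}$. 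A short check shows that ${\cal N}_{r,j}$ is invariant under both $\bar A_{hr}$ and $\bar A_{hj}$ and that the two operators agree on it, so the eigenvalues of this common restriction lie in $\sigma(\bar A_{hr})\cap\sigma(\bar A_{hj})=\{1\}$ by hypothesis (ii). Therefore ${\cal N}_{r,j}$ is contained in the generalized eigenspace of $\bar A_{hr}$ associated with the unitary eigenvalue, which by Assumption 2 and Perron-Frobenius coincides with $\langle{\bf 1}_N\rangle$. Since ${\bf x}(\tau)\notin\langle{\bf 1}_N\rangle$, every wrong candidate is excluded, and $r$ is identified.

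The main obstacle is the identification step: one has to set up the indistinguishability subspace properly, justify its joint invariance together with the coincidence of the two restrictions on it, and then combine the spectrum-disjointness hypothesis with simplicity of the Perron eigenvalue to force ${\cal N}_{r,j}$ down to the consensus line. The detection step is comparatively routine once the pointwise identity $\bar A_{hr}{\bf x}=A{\bf x}$ on $\ker([{\bf e}_h-{\bf e}_r]^\top)$ is isolated; hypothesis (i) plays an implicit role, ensuring that $\bar A_{hj}$ admits the same Perron-Frobenius structure and making the spectral argument apply symmetrically to every alternative hypothesis.
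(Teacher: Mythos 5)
Your proof is correct, and while the detection half follows essentially the paper's route, the identification half is genuinely different. For detection, the paper applies the PBH test to the pair $(\bar A_{hr}, ({\bf e}_h-{\bf e}_r)^\top)$ and then converts back to $(A,({\bf e}_h-{\bf e}_r)^\top)$; you instead observe that $({\bf e}_h-{\bf e}_r)^\top{\bf x}(t)=0$ makes the faulty and healthy updates coincide, so a vanishing residual over a window of length $N$ places ${\bf x}(\tau)$ directly in the unobservable subspace of $(A,({\bf e}_h-{\bf e}_r)^\top)$ — the same conclusion, reached a bit more economically. For identification, the paper stacks $2N$ consecutive residuals, factors out the (uniquely recoverable) direction $W{\bf e}_h$, and shows that ambiguity between $(r,h)$ and $(j,h)$ would force a nontrivial unobservable vector of the pair $\bigl(\mathrm{diag}(\bar A_{hr},\bar A_{hj}),\ [({\bf e}_h-{\bf e}_r)^\top \ \ ({\bf e}_h-{\bf e}_j)^\top]\bigr)$, hence a common eigenvalue $\lambda\ne 1$, contradicting \eqref{disj_spectra}. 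Your indistinguishability subspace ${\cal N}_{r,j}$, its joint invariance, and the spectral pinch $\sigma\bigl(\bar A_{hr}|_{{\cal N}_{r,j}}\bigr)\subseteq\sigma(\bar A_{hr})\cap\sigma(\bar A_{hj})=\{1\}$ combined with simplicity of the Perron root is a cleaner piece of linear algebra and is entirely sound. The one caveat worth flagging: your criterion compares the two faulty dynamics launched from the \emph{same} state ${\bf x}(\tau)$, i.e., it establishes identifiability from the observed state trajectory, whereas the paper's stacked-residual argument allows the competing hypothesis to posit a \emph{different} underlying state and thus establishes the literal claim that the residual sequence alone pins down the edge. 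Since this subsection assumes the whole state vector is measured (and the paper's own post-proposition algorithm also consults the state values), your version delivers the practically relevant conclusion, but it is formally a weaker statement about the residual signal per se; adapting your subspace argument to the pair of systems with independent initial states would recover the paper's exact claim.
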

  
\begin{proof}  
If   the   disconnection of edge $(r,h)$
takes place at $t=\tau$, then for every $k\ge 1$
\begin{eqnarray}
{\bf r}(\tau+k) &=& W\bar A_{hr} {\bf x}(\tau+k-1)- \tilde J_A W {\bf
                    x}(\tau+k-1) \nonumber \\
&=& [W \bar A_{hr}  - \tilde J_A W] \bar A_{hr}^{k-1} {\bf x}(\tau)\nonumber\\
&=& - \kappa \ell_{hr}   W {\bf e}_h [{\bf e}_h-{\bf e}_r]^\top \bar
    A_{hr}^{k-1} {\bf x}(\tau), \label{residuo}
\end{eqnarray}
where we used the identities $\bar A_{hr}=A -  \kappa \ell_{hr}    {\bf e}_h [{\bf e}_h-{\bf e}_r]^\top$ and $WA-\tilde J_AW=0$.
Therefore ${\bf r}(\tau+k) =0$ for every $k\ge 1$ if and only if ${\bf x}(\tau)\in
{\rm ker} {\mathcal W}_{hr},$
where
\be 
{\mathcal W}_{hr} := 
\begin{bmatrix}
- \kappa \ell_{hr} W {\bf e}_h ({\bf e}_h-{\bf e}_r)^\top\cr
- \kappa \ell_{hr} W {\bf e}_h ({\bf e}_h-{\bf e}_r)^\top \bar A_{hr}\cr
\vdots\cr
- \kappa \ell_{hr} W {\bf e}_h ({\bf e}_h-{\bf e}_r)^\top \bar A_{hr}^{N-1}
\end{bmatrix},
\label{mat_Whr}\ee
This amounts to saying that there exist $\lambda\in {\mathbb C}$ and ${\bf v}\ne 0$ such that 
the PBH observability matrix  satisfies
$$\begin{bmatrix} \lambda I_N - \bar A_{hr}\cr 
W {\bf e}_h [{\bf e}_h-{\bf e}_r]^\top\end{bmatrix} {\bf v}=0,$$
but this is easily seen to be equivalent to
the existence of $\lambda\in {\mathbb C}$ and ${\bf v}\ne 0$ such that 
$$\left\{ \begin{matrix} \bar A_{hr} {\bf v}=\lambda {\bf v}\cr
[{\bf e}_h-{\bf e}_r]^\top {\bf v}=0,
\end{matrix} \right.$$
and therefore 
to
the existence of $\lambda\in {\mathbb C}$ and ${\bf v}\ne 0$ such that 
$$\left\{ \begin{matrix} A{\bf v}=\lambda {\bf v}\cr
[{\bf v}]_h=  [{\bf v}]_r.
\end{matrix} \right.$$
Discernibility assumption rules out the possibility that the previous condition holds unless $\lambda=1$ and ${\bf v}\in \langle {\bf 1}_N\rangle$. On the other hand, if the previous condition holds only for $\lambda=1$ and ${\bf v}\in \langle {\bf 1}_N\rangle$, this means that
${\bf x}(\tau)\in \langle {\bf 1}_N\rangle$, namely the disconnection had taken place after the network had reached consensus, a situation in which detection is not possible. This proves that 
there exists $k\in [1,N]$ such that ${\bf r}(\tau+k)\ne 0$.

Now we want to prove that under the assumptions that:  (i) the disconnection of any edge $(j,h)$ results in a new 
 strongly connected network, discernible from the original one, and   (ii) condition \eqref{disj_spectra} holds, it is possible to uniquely identify the broken link from the residuals. 
By the previous part of the proof,  if the disconnection takes place at $t=\tau$ and ${\bf x}(\tau)\not\in \langle{\bf 1}_N\rangle$, then at least one of the values 
${\bf r}(\tau+k), k=1,2,\dots, N,$ must be nonzero. Set
\be
k^* := \min \{k\ge 1: {\bf r}(\tau+k)\ne 0\}.
\label{kappastar}
\ee
Then 
${\bf r}(\tau+k^*) = c_{k^*} \cdot W {\bf e}_h,$
where 
\begin{eqnarray}
c_{k^*} &:=& - \kappa  \ell_{hr}  ({\bf e}_h-{\bf e}_r)^\top \bar A_{hr}^{k^*-1}{\bf x}(\tau)
\label{ckappastar}\\
&=& - \kappa  \ell_{hr} ({\bf e}_h-{\bf e}_r)^\top  {\bf x}(\tau+k^*-1)
\nonumber \\
&=& - \kappa  \ell_{hr} 
\left(  [{\bf x}(\tau+k^*-1)]_h- [{\bf x}(\tau+k^*-1)]_r
\right) 
\ne 0. \nonumber
\end{eqnarray}
By Lemma \ref{tech4}  in the Appendix, we can claim that this vector uniquely identifies the index $h$, namely one of the extremes of the edge that got disconnected.
We now note that 
$$\begin{bmatrix}{\bf r}(\tau+k^*)\cr 
{\bf r}(\tau+k^*+1)\cr
\vdots\cr
{\bf r}(\tau+k^*+2N-1)
\end{bmatrix}=  \begin{bmatrix}
W {\bf e}_h & & &\cr & W{\bf e}_h & &\cr &&\ddots & \cr
&&& W {\bf e}_h\end{bmatrix} 
$$
$$\cdot \begin{bmatrix}
  ({\bf e}_h-{\bf e}_r)^\top\cr
  ({\bf e}_h-{\bf e}_r)^\top \bar A_{hr}\cr
\vdots\cr
 ({\bf e}_h-{\bf e}_r)^\top \bar A_{hr}^{2N-1}
\end{bmatrix} (- \kappa \ell_{hr}) {\bf x}(\tau+k^*-1).$$
We have just proved that  we can uniquely identify $W{\bf e}_h$ from the first nonzero residual.
Moreover, the block diagonal matrix having $W{\bf e}_h$ as diagonal block is clearly of full column rank, and hence the vector
${\color{black} {\bf  Y}}\ne 0$
such that 
$$\begin{bmatrix}{\bf r}(\tau+k^*)\cr 
{\bf r}(\tau+k^*+1)\cr
\vdots\cr
{\bf r}(\tau+k^*+2N-1)
\end{bmatrix}=  \begin{bmatrix}
W {\bf e}_h & & &\cr & W{\bf e}_h & &\cr &&\ddots & \cr
&&& W {\bf e}_h\end{bmatrix} {\color{black} {\bf  Y}}$$
is uniquely determined. 
Now, we want to show that under assumptions (i) and (ii) we can uniquely identify the index $r$ such that
$${\color{black} {\bf  Y}}\in {\rm Im} \begin{bmatrix}
  ({\bf e}_h-{\bf e}_r)^\top\cr
  ({\bf e}_h-{\bf e}_r)^\top \bar A_{hr}\cr
\vdots\cr
 ({\bf e}_h-{\bf e}_r)^\top \bar A_{hr}^{2N-1}
\end{bmatrix}.$$
If this were not the case, then there would be another index $j\ne r$ (and $j\ne h$) such that
$${\color{black} {\bf  Y}}\in {\rm Im} \begin{bmatrix}
  ({\bf e}_h-{\bf e}_r)^\top\cr
  ({\bf e}_h-{\bf e}_r)^\top \bar A_{hr}\cr
\vdots\cr
 ({\bf e}_h-{\bf e}_r)^\top \bar A_{hr}^{2N-1}
\end{bmatrix} \cap {\rm Im} \begin{bmatrix}
  ({\bf e}_h-{\bf e}_j)^\top\cr
  ({\bf e}_h-{\bf e}_j)^\top \bar A_{hj}\cr
\vdots\cr
 ({\bf e}_h-{\bf e}_j)^\top \bar A_{hj}^{2N-1}
\end{bmatrix},$$
and hence there would be two nonzero vectors ${\bf z}_r$ and ${\bf z}_j$  such that 
$${\color{black} {\bf  Y}} = \begin{bmatrix}
  ({\bf e}_h-{\bf e}_r)^\top\cr
  ({\bf e}_h-{\bf e}_r)^\top \bar A_{hr}\cr
\vdots\cr
 ({\bf e}_h-{\bf e}_r)^\top \bar A_{hr}^{2N-1}
\end{bmatrix} {\bf z}_r = -  \begin{bmatrix}
  ({\bf e}_h-{\bf e}_j)^\top\cr
  ({\bf e}_h-{\bf e}_j)^\top \bar A_{hj}\cr
\vdots\cr
 ({\bf e}_h-{\bf e}_j)^\top \bar A_{hj}^{2N-1}
\end{bmatrix} {\bf z}_j.$$
Clearly, neither ${\bf z}_r$ nor ${\bf z}_j$ can belong to $\langle {\bf 1}_N\rangle$, otherwise ${\color{black} {\bf  Y}}$ would be zero.
Condition
$$0= \begin{bmatrix}
  ({\bf e}_h-{\bf e}_r)^\top &  ({\bf e}_h-{\bf e}_j)^\top\cr
  ({\bf e}_h-{\bf e}_r)^\top \bar A_{hr} &  ({\bf e}_h-{\bf e}_j)^\top \bar A_{hj} \cr
\vdots & \vdots \cr
 ({\bf e}_h-{\bf e}_r)^\top \bar A_{hr}^{2N-1} & ({\bf e}_h-{\bf e}_j)^\top \bar A_{hj}^{2N-1}
\end{bmatrix} \begin{bmatrix} {\bf z}_r\cr {\bf z}_j\end{bmatrix}$$
corresponds to saying that the unobservable subspace of the  matrix pair
$$\left(\begin{bmatrix} \bar A_{hr} & 0\cr 0 & \bar A_{hj}\end{bmatrix}, 
\begin{bmatrix}
  ({\bf e}_h-{\bf e}_r)^\top &  ({\bf e}_h-{\bf e}_j)^\top\end{bmatrix}\right)$$
  includes the vector $\begin{bmatrix} {\bf z}_r\cr {\bf z}_j\end{bmatrix}\not\in \langle {\bf 1}_{2N}\rangle.$
  Clearly, by the irreducibility assumption on $\bar A_{hr}$ and $\bar A_{hj}$,  this cannot be an eigenvector  corresponding to $\lambda=1$. On the other hand, the fact that ${\bf z}_r$ and ${\bf z}_j$ are both nonzero implies that there is  an eigenvalue $\lambda \ne 1$ common to $\sigma(\bar A_{hr})$ and $\sigma(\bar A_{hj})$. But this contradicts   assumption (ii), and hence $r$ is uniquely determined.  
\end{proof}
  
We want now to sketch an algorithm to identify the edge $(r,h)$ that got disconnected.
Suppose that at $t=\tau$ the edge $(r,h)$ gets disconnected and that 
 the first nonzero residual after $t=\tau$ is
${\bf r}(\tau+k^*) = c_{k^*} \cdot W {\bf e}_h$, with $k^*>0$ and $c_{k^*}\ne0$ defined as in  \eqref{kappastar} and \eqref{ckappastar}, respectively.
 By the previous reasoning, we can claim that there exists a unique value of $h\in [1,N]$ such that 
${\bf r}(\tau+k^*) \in \langle W {\bf e}_h\rangle$, and this allows to uniquely identify $h$ and hence the coefficient $c_{k^*}$.
From the knowledge of $h$ and $c_{k^*}$ one can infer the identity of
$r$ by comparing  the state value of each in-neighbour of $h$ with the
state  value
of the same node  which produces the residual ${\bf
  r}(\tau+k^*)$. Since it must hold
\be\label{idrtauk} [{\bf x}(\tau+k^*-1)]_r =\dfrac{c_{k^*}}{\kappa  \ell_{hr} }  + [{\bf x}(\tau+k-1)]_h, \ee
 $r$ must belong to the following set}
\begin{eqnarray*}
r\in {\mathcal R}_{k^*} \!\!\!&:=&\!\!\! \{i\in [1,N], i\ne h: \ell_{hi}\ne 0 \ {\rm and}\\
&&\!\!\! [{\bf x}(\tau+k^*-1)]_i  
=\dfrac{c_{k^*}}{\kappa  \ell_{hi} }  + [{\bf x}(\tau+k^*-1)]_h\}.
\end{eqnarray*}
If $|{\mathcal R}_{k^*}|=1$, then $r$ is identified at the first step. If not, one can evaluate the set ${\mathcal R}_{k^*+1}$
and then the intersection ${\mathcal R}_{k^*}  \cap {\mathcal R}_{k^*+1}$. By proceeding in this way, based on the previous proof, this procedure  identifies in a finite number of steps the  value of the index $r$ that represents the first extreme of the edge that got disconnected, since there exists {\color{black}$0\le d \le 2N-1$} such that the set $\cap_{i=0}^d {\mathcal R}_{k^*+i}$ consists of a single element.
\smallskip

We now explore the more interesting case of discernibility from the observation of the first $p$ agents.

\section{Discernibility from the observation of the first $p$ agents after edge disconnection}

  By referring to the second part of Definition \ref{def1}, it is easily seen that discernibility of the two
 systems from the observation of the first $p$ agents imposes the
 observability of the original system. 
 If not, condition \eqref{notdisc_p} would be contradicted for any unobservable state ${\bf x}(\tau)$ and ${\bf x}_\tau=0$.
 \\
 On the other hand, the lack of observability of the faulty system could lead to some pathological situations, since the output measurements could possibly lead to believe that the faulty network has already reached the consensus to some constant value, while it is still evolving.  So, in the following we will assume:
\smallskip

{\bf Assumption 3.} Both the original system and the faulty one are observable from the first $p$ agents, namely both $(A, \begin{bmatrix} I_p&0\end{bmatrix})$  and $(\bar A_{hr}, \begin{bmatrix} I_p&0\end{bmatrix})$ are observable.
\smallskip

Under this assumption, 
we will characterise discernibility from the observation of the first $p$ agents
in terms of the   matrix pair $(\Delta, \Gamma_p)$, with
 \be
 \Delta := \begin{bmatrix} A & 0\cr 0 & \bar A_{hr}\end{bmatrix}\qquad
 \Gamma_p :=\begin{bmatrix}I_p &0 & - I_p&0\end{bmatrix}.
 \label{coppia_oss_p}
 \ee
It is worth noticing that since $A$ is a positive irreducible matrix, having ${\bf 1}_N$ as dominant eigenvector corresponding to the unitary eigenvalue, clearly $1$ is always an observable eigenvalue 
of the pair $(A, \begin{bmatrix} I_p&0\end{bmatrix})$, and hence if the pair would not be observable, the   eigenvalues of the non-observable subsystem would necessarily have modulus smaller than $1$.
The same reasoning applies to  $\bar A_{hr}$, as far as  it remains irreducible.
Finally, the irreducibility assumption on both $A$ and $\bar A_{hr}$ ensures that the eigenspace of both $A$ and $\bar A_{hr}$ corresponding to $\lambda=1$ is $\langle {\bf 1}_N\rangle$. So, 
the only unobservable eigenvectors of $(\Delta, \Gamma_p)$ corresponding to the unitary eigenvalue are those belonging to $\langle {\bf 1}_{2N}\rangle$.

\noindent Assumption 3 and the previous comments are fundamental to derive the following result, that   extends 
   Proposition 3 in \cite{BattistelliTesi}.
\smallskip

\begin{proposition}  \label{ostrega}   
 Consider the networks \eqref{sana} and \eqref{rotta}, this latter obtained from the former after the disconnection of the edge $(r,h)$, and assume that Assumptions 1, 2 and 3 hold. The following facts are equivalent:
\begin{itemize}
\item[i)] the networks \eqref{sana} and \eqref{rotta} are discernible from  the observation of the first $p$ agents;
\item[ii)] the unobservable states of the pair $(\Delta, \Gamma_p)$ are those in $\langle {\bf 1}_{2N}\rangle$ and they   correspond to the unitary eigenvalue;
\item[iii)] for every $\lambda \in \sigma(A)\cap\sigma(\bar A_{hr}), \lambda\ne 1$,
$${\rm rank}\begin{bmatrix}\lambda I_N - A& 0\cr 0 & \lambda I_N -\bar A_{hr}\cr
I_p\ \ \ 0 & - I_p \ \ \ 0\end{bmatrix}= 2N;
$$
\item[iv)]
there are no $\lambda\in {\mathbb C}$   and nonzero vectors ${\bf v}, \bar {\bf v}$, except for  $\lambda=1$ and ${\bf v}=\bar {\bf v}\in \langle {\bf 1}_N \rangle$, such that
\be
\left\{\begin{array}{rcl}
A {\bf v} =\lambda {\bf v},  && 
\bar A_{hr} \bar {\bf v} = \lambda \bar {\bf v} \\ 
\begin{bmatrix}I_p&0\end{bmatrix} {\bf v} &=& 
\begin{bmatrix}I_p&0\end{bmatrix} \bar {\bf v}.
\end{array}\right.
\label{cond_not_disc_pp}
\ee
\end{itemize}
\end{proposition}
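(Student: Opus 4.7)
The plan is to mimic the scheme of Proposition \ref{classica}, but carefully accounting for the partial observation $\begin{bmatrix}I_p & 0\end{bmatrix}$ and exploiting Assumption 3 to excise degenerate cases. I would prove i) $\Leftrightarrow$ ii), then ii) $\Leftrightarrow$ iii), then iii) $\Leftrightarrow$ iv).

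For i) $\Leftrightarrow$ ii), I would argue that non-discernibility from the first $p$ agents means there exist a fault time $\tau$, an initial state ${\bf x}(\tau)\notin \langle {\bf 1}_N\rangle$ and a faulty state $\bar{\bf x}_\tau$ such that $\begin{bmatrix}I_p & 0\end{bmatrix}(A^{t-\tau}{\bf x}(\tau)-\bar A_{hr}^{t-\tau}\bar{\bf x}_\tau)=0$ for every $t\ge\tau$, which is exactly the statement that $\begin{bmatrix}{\bf x}(\tau)\\ \bar{\bf x}_\tau\end{bmatrix}$ is an unobservable state of the pair $(\Delta,\Gamma_p)$ with first block not proportional to ${\bf 1}_N$, so not in $\langle {\bf 1}_{2N}\rangle$. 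Conversely, given any unobservable state $\begin{bmatrix}{\bf v}_1\\ {\bf v}_2\end{bmatrix}\notin\langle {\bf 1}_{2N}\rangle$ of $(\Delta,\Gamma_p)$, I would show that necessarily ${\bf v}_1\notin\langle {\bf 1}_N\rangle$: if instead ${\bf v}_1=\alpha{\bf 1}_N$, then $\begin{bmatrix}I_p & 0\end{bmatrix}\bar A_{hr}^k({\bf v}_2-\alpha {\bf 1}_N)=\begin{bmatrix}I_p & 0\end{bmatrix}\bar A_{hr}^k{\bf v}_2-\alpha{\bf 1}_p=0$ for every $k\ge 0$, and observability of $(\bar A_{hr},\begin{bmatrix}I_p & 0\end{bmatrix})$ granted by Assumption 3 would force ${\bf v}_2=\alpha{\bf 1}_N$, contradicting the assumption that the pair is not in $\langle {\bf 1}_{2N}\rangle$. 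So ${\bf x}(\tau):={\bf v}_1, \bar{\bf x}_\tau:={\bf v}_2$ violates i).

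For ii) $\Leftrightarrow$ iii) I would apply the PBH rank test. The unobservable subspace of $(\Delta,\Gamma_p)$ is generated by those nonzero $\begin{bmatrix}{\bf v}\\ \bar{\bf v}\end{bmatrix}$ lying in the kernel of $\begin{bmatrix}\lambda I_N-A & 0\\ 0 & \lambda I_N-\bar A_{hr}\\ I_p \ 0 & -I_p \ 0\end{bmatrix}$ for some $\lambda\in{\mathbb C}$. If $\lambda\notin\sigma(A)\cap\sigma(\bar A_{hr})$, then one of the two diagonal blocks $\lambda I_N-A$, $\lambda I_N-\bar A_{hr}$ has full column rank, forcing the corresponding block of the kernel vector to vanish; the remaining block is then zero by the equality on the first $p$ entries combined with Assumption 3. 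For $\lambda=1$, the irreducibility of $A$ and $\bar A_{hr}$ together with Perron--Frobenius make $1$ a simple eigenvalue of both matrices with eigenspace $\langle{\bf 1}_N\rangle$, so the only element (up to scaling) of the kernel is $\begin{bmatrix}{\bf 1}_N\\ {\bf 1}_N\end{bmatrix}\in\langle{\bf 1}_{2N}\rangle$, which is admissible. Hence ii) reduces to imposing full column rank of the displayed PBH matrix for every $\lambda\in\sigma(A)\cap\sigma(\bar A_{hr}),\ \lambda\ne 1$, which is exactly iii).

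The equivalence iii) $\Leftrightarrow$ iv) is a direct translation: a rank deficiency of the PBH matrix at some $\lambda\in\sigma(A)\cap\sigma(\bar A_{hr}),\ \lambda\ne 1$, amounts to the existence of nonzero ${\bf v},\bar{\bf v}$ satisfying $A{\bf v}=\lambda{\bf v}$, $\bar A_{hr}\bar{\bf v}=\lambda\bar{\bf v}$, $\begin{bmatrix}I_p & 0\end{bmatrix}{\bf v}=\begin{bmatrix}I_p & 0\end{bmatrix}\bar{\bf v}$, and the excluded trivial case $\lambda=1,\ {\bf v}=\bar{\bf v}\in\langle{\bf 1}_N\rangle$ is precisely the one carved out in iv). The main obstacle is the step within i) $\Rightarrow$ ii), namely ensuring that an unobservable pair outside $\langle{\bf 1}_{2N}\rangle$ indeed produces an initial state outside $\langle{\bf 1}_N\rangle$; this is where Assumption 3 is indispensable, since without observability of the faulty system from the first $p$ agents, the first $p$ components could remain constant while $\bar{\bf x}_\tau$ evolves nontrivially in the unobservable subspace, so $\bar{\bf x}_\tau\notin\langle{\bf 1}_N\rangle$ could be compatible with ${\bf v}_1=\alpha{\bf 1}_N$ and the clean equivalence with ii) would break down.
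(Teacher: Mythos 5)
Your proof is correct and follows the same overall architecture as the paper's: the chain i) $\Leftrightarrow$ ii) $\Leftrightarrow$ iii) $\Leftrightarrow$ iv), with the PBH test doing the work in the middle and Assumption 3 invoked to rule out kernel vectors with a vanishing block when $\lambda\notin\sigma(A)\cap\sigma(\bar A_{hr})$. The one place where you genuinely deviate is the converse implication in i) $\Leftrightarrow$ ii). The paper argues that an unobservable subspace strictly larger than $\langle {\bf 1}_{2N}\rangle$ must contain an eigenvector of $\Delta$ associated with some $\lambda\ne 1$, uses Assumption 3 to conclude that both blocks of that eigenvector are nonzero, and then has to split into the real and complex eigenvalue cases (passing to the real part of ${\bf v}_1$ in the latter). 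You instead take an arbitrary unobservable state $\begin{bmatrix}{\bf v}_1^\top & {\bf v}_2^\top\end{bmatrix}^\top\notin\langle {\bf 1}_{2N}\rangle$ and show directly that ${\bf v}_1\notin\langle {\bf 1}_N\rangle$, because ${\bf v}_1=\alpha{\bf 1}_N$ together with observability of $(\bar A_{hr},\begin{bmatrix}I_p&0\end{bmatrix})$ would force ${\bf v}_2=\alpha{\bf 1}_N$. This is cleaner: it avoids the eigenvector extraction (which itself rests on the fact that $\lambda=1$ is semisimple for $\Delta$ with its unobservable eigenvectors confined to $\langle{\bf 1}_{2N}\rangle$) and the real/complex bookkeeping, while using Assumption 3 at essentially the same point the paper does. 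Everything else --- the treatment of $\lambda\notin\sigma(A)\cap\sigma(\bar A_{hr})$, the $\lambda=1$ kernel computation via irreducibility and Perron--Frobenius, and the direct translation iii) $\Leftrightarrow$ iv) --- matches the paper's proof, so both arguments are sound and yours is, if anything, slightly more economical.
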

 \smallskip
 
\begin{proof}  i) $\Leftrightarrow$ ii) \  Suppose that  the networks \eqref{sana} and \eqref{rotta} are not discernible from  the observation of the first $p$ agents. Then there exist  ${\bf x}(0)\not\in \langle{\bf 1}_N\rangle$  and $\bar {\bf x}_0\in {\mathbb R}^N$ such that
$\begin{bmatrix}I_p&0\end{bmatrix} \bar A^{t}{\bf x}_0=   \begin{bmatrix}I_p&0\end{bmatrix} A^{t}{\bf x}(0)$ for every $t\ge 0$. This is equivalent to saying that 
$\begin{bmatrix} {\bf x}(0)\cr {\bf x}_0\end{bmatrix}$, which does not belong to $\langle {\bf 1}_{2N}\rangle$, is not observable for the pair $(\Delta, \Gamma_p)$.
\\
Conversely, suppose that there exists an unobservable state of the pair $(\Delta, \Gamma_p)$, ${\bf x} \not \in \langle {\bf 1}_{2N}\rangle$. Since the only eigenvectors of $\Delta$ corresponding to the unitary eigenvalue and belonging to the  unobservable subspace are those in $\langle {\bf 1}_{2N}\rangle$, this implies that there exists an eigenvector ${\bf v}=\begin{bmatrix}{\bf v}_1\cr {\bf v}_2\end{bmatrix} \not \in \langle {\bf 1}_{2N}\rangle$ of $\Delta$ corresponding to some $\lambda\ne 1$ and  satisfying
$$\begin{bmatrix} I_p&0\end{bmatrix} {\bf v}_1=\begin{bmatrix} I_p&0\end{bmatrix} {\bf v}_2.$$
Note that Assumption 3 ensures that both ${\bf v}_1$ and ${\bf v}_2$ are nonzero vectors.
Since ${\bf v}_1\not\in \langle {\bf 1}_n\rangle$, if $\lambda$ is real then we have found a state that contradicts discernibility from the observation of the first $p$ agents. If $\lambda$ is complex  then we can simply use the real part of ${\bf v}_1$ to disprove discernibility from the observation of the first $p$ agents.
\newline 
ii)\ $\Leftrightarrow$\ iii)\   Condition ii) is easily seen to be equivalent to the following condition, expressed in terms of PBH observabilty matrix:
if there exist $\lambda\in {\mathbb C}$ and $\begin{bmatrix} {\bf v}\cr \bar {\bf v}\end{bmatrix}\ne 0$ such that
\be
\begin{bmatrix}\lambda I_N - A& 0\cr 0 & \lambda I_N -\bar A_{hr}\cr
I_p\ \ \ 0 & - I_p \ \ \ 0\end{bmatrix}\begin{bmatrix} {\bf v}\cr \bar {\bf v}\end{bmatrix}=0,
\label{T1bis}\ee
then $\lambda=1$ and $\begin{bmatrix} {\bf v}\cr \bar {\bf v}\end{bmatrix}\in \langle 1_{2N}\rangle$.
Clearly any such $\lambda$ must be in $\sigma(\Delta)$. On the other hand,  if $\lambda$ would not be a common {\color{black} eigenvalue} of $A$ and $\bar A_{hr}$ then either ${\bf v}$ or $\bar {\bf v}$ would be zero and this would mean that either 
$(A, \begin{bmatrix} I_p&0\end{bmatrix})$  or $(\bar A_{hr}, \begin{bmatrix} I_p&0\end{bmatrix})$ are not observable. This would contradict Assumption 3. Therefore we have proved that ii) is equivalent to iii).
 
\noindent iii)\ $\Leftrightarrow$ \ iv)\ Obvious.
\end{proof}

\begin{remark}  
    If the networks \eqref{sana} and \eqref{rotta} are discernible from  the observation of the first $p$ agents, they are discernible. If not,  a state ${\bf x}\not \in \langle {\bf 1}_N\rangle$ could be found such that 
 $\bar A_{hr}^{t}{\bf x}= A^t {\bf x}$ for every $t\ge 0$, and hence a fortiori
  $\begin{bmatrix} I_p&0\end{bmatrix}\bar A_{hr}^{t}{\bf x}= \begin{bmatrix} I_p&0\end{bmatrix}A^t {\bf x}$ for every $t\ge 0$.
This implies that  a necessary condition for discernibility from  the observation of the first $p$ agents is that all the nonunitary eigenvalues of  $A$ and $\bar A_{hr}$ have unitary geometric multiplicity. \end{remark}

\subsection{How to detect and identify an edge disconnection when the states of the first $p$ agents are available}

 Also in this case we may detect an edge disconnection by making use of the measurements of the states of the first $p$ agents. 
Since the pair $(A, \begin{bmatrix} I_p &0\end{bmatrix})$ is observable,  let $L$ be a matrix in ${\mathbb R}^{N\times p}$ such that
$A+L \begin{bmatrix} I_p &0\end{bmatrix}$ is nilpotent.
We can construct the   closed-loop dead-beat observer of the state of the multi-agent system \cite{OReilly} as
\be
\hat {\bf x}(t+1) = A \hat {\bf x}(t) -L [{\bf y}(t) - \begin{bmatrix}
  I_p &0\end{bmatrix}  \hat {\bf x}(t)].
\label{DBO}
\ee
Clearly, after a   finite number of steps {\color{black} $\tau_0$\footnote{{\color{black}By exploiting the observability of the pair $(A, \begin{bmatrix} I_p &0\end{bmatrix})$ and Rosenbrock's theorem, we can claim that the minimum 
$\tau_0$ ranges in the interval 
$\left[ \lceil \frac{N}{p}\rceil, N-p+1\right]$.}}} that depends on the nilpotency index of $A+L\begin{bmatrix} I_p &0\end{bmatrix}$,
we have $\hat {\bf x}(t)={\bf x}(t)$, and hence the residual signal 
$${\bf r}(t) =  \begin{bmatrix} I_p &0\end{bmatrix}\hat {\bf x}(t) - {\bf y}(t) =  \begin{bmatrix} I_p &0\end{bmatrix}[\hat {\bf x}(t) - {\bf x}(t)],$$
is identically zero from $t=\tau_0$ onward until a fault occurs.
Now suppose that at $t=\tau\ge \tau_0$ the disconnection of the edge $(r,h)$ takes place\footnote{{\color{black}In the transient phase, until the estimation error goes to zero, the residual signal may be not zero. As a result it is not possible to detect an edge disconnection in a reliable way. This is the reason why a dead-beat observer is preferable over an asymptotic observer, since this transient phase lasts a finite number of time instants.}}, and hence the multi-agent state updates according to \eqref{rotta}.
We want to show that, unless the multi-agent system has already reached consensus,   under Assumptions  1, 2 and 3 and any of the equivalent conditions of Proposition \ref{ostrega}   the residual signal will necessarily become nonzero at some time instant $t> \tau$.
 \\
To this goal it is sufficient to show that for  the system obtained by putting together \eqref{rotta} and \eqref{DBO}, namely
 \begin{eqnarray}
 \!\!\begin{bmatrix} \hat {\bf x}(t+1) \cr {\bf x}(t+1)\end{bmatrix}
 \!\! &=& \!\! \begin{bmatrix} A + L \begin{bmatrix} I_p &0\end{bmatrix} & - L \begin{bmatrix} I_p &0\end{bmatrix}\cr 0 & \bar A_{hr}\end{bmatrix} 
 \begin{bmatrix} \hat {\bf x}(t) \cr {\bf x}(t)\end{bmatrix}\\
 {\bf r}(t) \label{res_gen_error} \!\!&=& \!\! \begin{bmatrix} \begin{bmatrix} I_p
     &0\end{bmatrix} & \begin{bmatrix} - I_p
     &0\end{bmatrix}\end{bmatrix}
  \begin{bmatrix} \hat {\bf x}(t) \cr {\bf x}(t)\end{bmatrix}
 \end{eqnarray}
the only unobservable states are those in $\langle {\bf 1}_{2N}\rangle.$\\
{\color{black} Consider the PBH observability matrix
\be
\begin{bmatrix} \lambda I_N - A - L \begin{bmatrix} I_p &0\end{bmatrix} &  L \begin{bmatrix} I_p &0\end{bmatrix}\cr 0 & \lambda I_N - \bar A_{hr}\cr\hline
 \begin{bmatrix} I_p &0\end{bmatrix} & \begin{bmatrix}- I_p &0\end{bmatrix}\end{bmatrix}.
 \label{nuova}\ee
 By resorting to elementary   operations on the rows of the matrix, 
 it is easily seen that the previous matrix is of full column rank for $\lambda\in {\mathbb C}$ if and only if
 the 
 PBH observability matrix
\be
\begin{bmatrix} \lambda I_N - A   &  0 \cr 0 & \lambda I_N - \bar A_{hr}\cr\hline
 \begin{bmatrix} I_p &0\end{bmatrix} & \begin{bmatrix} -I_p &0\end{bmatrix}\end{bmatrix} 
 \label{nuovaa}
 \ee
 is of full column rank for that $\lambda$. Moreover when both PBH matrices have not full column rank, they have the same kernel.
 By the assumption that
 \eqref{sana} and \eqref{rotta} are discernible from the observation of the first $p$ 
agents it follows (see condition iii) of Proposition \ref{ostrega}) that \eqref{nuovaa} is of full column rank for every $\lambda\ne1$ and for $\lambda=1$ its kernel is $\langle {\bf 1}_{2N}\rangle.$ Therefore, 
the only unobservable states of the overall system  are those in $\langle {\bf 1}_{2N}\rangle$, and this implies that}
the residual ${\bf r}(t)$ cannot remain zero after an edge disconnection.

\smallskip

 
We now want to show that, under suitable assumptions, one can deduce from the residual the exact information about which edge got disconnected.

\begin{proposition}\label{faultdetp_works}
Consider the networks \eqref{sana} and \eqref{rotta}, this latter obtained from the former after the disconnection of the edge $(r,h)$ at some time {\color{black} $t=\tau\ge \tau_0$}.  Assume that    for every edge $(j,i), \ i,j\in [1,N], j\ne i$, (and not only for $(j,i)=(r,h)$) \\
  (i) the faulty network obtained by disconnecting  $(j,i)$ is strongly connected and discernible   from the original network based on the observation of the first $p$ states,  \\
 (ii) 
{\color{black} Assumption 3 holds   and hence    $(A, \begin{bmatrix}I_p &0\end{bmatrix})$ and $(\bar A_{ij}, \begin{bmatrix}I_p &0\end{bmatrix})$ are observable},  and   \\ (iii)   \be
\sigma(\bar A_{hr})\cap  \sigma(\bar A_{ij}) =\{1\}. \label{disj_spectra_p}
\ee
Then,  unless ${\bf x}(\tau)\in \langle {\bf 1}_N\rangle$, namely the network has already reached consensus, 
it is possible to identify  from the residual signal ${\bf r}(t), t\ge \tau$, generated by \eqref{res_gen_error},
the edge $(r,h)$ that got disconnected.
 \end{proposition}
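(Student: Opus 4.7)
The detection part---that ${\bf r}(t)$ cannot remain identically zero after the fault when ${\bf x}(\tau)\notin \langle {\bf 1}_N\rangle$---has already been established in the discussion immediately preceding the proposition, through the PBH analysis of matrix \eqref{nuova}. My plan for the identification part is to reduce it to a discernibility question between any two distinct faulty networks, and then to invoke hypotheses (i)--(iii) to rule out ambiguity. Throughout I abbreviate $C := \begin{bmatrix} I_p & 0\end{bmatrix}$.

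First I would observe that the residual sequence uniquely encodes the post-fault output sequence. Rewriting \eqref{DBO} as $\hat{\bf x}(t+1) = A\hat{\bf x}(t) + L{\bf r}(t)$ (using ${\bf r}(t)=C\hat{\bf x}(t)-{\bf y}(t)$), and exploiting the fact that $\hat{\bf x}(\tau) = {\bf x}(\tau)$ because $\tau \geq \tau_0$, one can propagate the estimator forward from ${\bf x}(\tau)$ using only the residual sequence, and then recover ${\bf y}(t) = C\hat{\bf x}(t) - {\bf r}(t)$. So two edge disconnections produce the same residual sequence if and only if they produce the same post-fault output trajectory, and it suffices to show that the output trajectory uniquely identifies $(r,h)$.

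Suppose by contradiction that some edge $(j,i) \neq (r,h)$, together with a suitable initial state $\bar{\bf x}_\tau \in \mathbb{R}^N$, would produce the same output sequence as the true fault, namely
\[
C\,\bar A_{hr}^k\,{\bf x}(\tau) = C\,\bar A_{ij}^k\,\bar{\bf x}_\tau, \qquad \forall\, k \geq 0.
\]
Equivalently, $\begin{bmatrix}{\bf x}(\tau)^\top & \bar{\bf x}_\tau^\top\end{bmatrix}^\top$ lies in the unobservable subspace of the augmented pair
\[
\left(\begin{bmatrix}\bar A_{hr} & 0 \\ 0 & \bar A_{ij}\end{bmatrix},\; \begin{bmatrix}C & -C\end{bmatrix}\right).
\]
The PBH criterion, applied exactly as in Proposition~\ref{ostrega}, then yields an unobservable eigenvector $({\bf v},\bar{\bf v}) \neq 0$ and some $\lambda \in \mathbb{C}$ with $\bar A_{hr}{\bf v} = \lambda{\bf v}$, $\bar A_{ij}\bar{\bf v} = \lambda\bar{\bf v}$, and $C{\bf v} = C\bar{\bf v}$. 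Hypothesis (ii) excludes the cases ${\bf v} = 0$ and $\bar{\bf v} = 0$ (either would produce an unobservable eigenvalue for one of the two observable pairs), so both are nonzero and therefore $\lambda \in \sigma(\bar A_{hr}) \cap \sigma(\bar A_{ij}) = \{1\}$ by hypothesis (iii). Assumption 2 and irreducibility then force ${\bf v} = \alpha {\bf 1}_N$ and $\bar{\bf v} = \beta {\bf 1}_N$, while $C{\bf v} = C\bar{\bf v}$ gives $\alpha {\bf 1}_p = \beta {\bf 1}_p$, so $\alpha = \beta$ and the eigenvector belongs to $\langle {\bf 1}_{2N}\rangle$.

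It follows that the unobservable subspace of the augmented pair coincides with $\langle {\bf 1}_{2N}\rangle$, hence ${\bf x}(\tau) \in \langle {\bf 1}_N\rangle$, contradicting the standing assumption that consensus has not yet been reached. No alternative edge can therefore reproduce the observed residual, and the broken edge is uniquely identified. The main delicate point I expect is the PBH case analysis above, specifically the exclusion of ``ghost'' solutions having ${\bf v}=0$ or $\bar{\bf v}=0$: such a solution could otherwise sit at an unobservable eigenvalue of a single faulty system and fabricate a spurious alternative fault even when the two spectra are disjoint as in (iii). It is precisely hypothesis (ii) that precludes this, so the three hypotheses (i)--(iii) are genuinely needed in concert; without (ii) the disjoint-spectra assumption alone would not suffice.
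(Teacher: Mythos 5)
Your PBH analysis of the pair $\left(\begin{bmatrix}\bar A_{hr}&0\cr 0&\bar A_{ij}\end{bmatrix},\begin{bmatrix}C&-C\end{bmatrix}\right)$ is sound: the use of hypothesis (ii) to exclude the ``ghost'' solutions with ${\bf v}=0$ or $\bar{\bf v}=0$, of (iii) to force $\lambda=1$, and of irreducibility to land in $\langle {\bf 1}_{2N}\rangle$ mirrors how the paper deploys the same hypotheses. The gap is in the reduction that precedes it. The claimed equivalence ``same residual sequence iff same post-fault output trajectory'' fails as soon as you allow the alternative fault to start from an arbitrary $\bar{\bf x}_\tau$. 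If ${\bf r}_1\equiv{\bf r}_2$, then propagating $\hat{\bf x}(t+1)=A\hat{\bf x}(t)+L{\bf r}(t)$ from the two \emph{different} observer states at $t=\tau$ (which equal ${\bf x}(\tau)$ and $\bar{\bf x}_\tau$ in the two scenarios) gives ${\bf y}_1(t)-{\bf y}_2(t)=\begin{bmatrix}I_p&0\end{bmatrix}A^{t-\tau}\bigl({\bf x}(\tau)-\bar{\bf x}_\tau\bigr)$, which by observability of $(A,\begin{bmatrix}I_p&0\end{bmatrix})$ is not identically zero unless $\bar{\bf x}_\tau={\bf x}(\tau)$; the converse implication fails in the same way through $(A+LC)^{t-\tau}$. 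Consequently, what your contradiction argument refutes is output-indistinguishability of the two faulty plants, i.e.\ effectively only the case $\bar{\bf x}_\tau={\bf x}(\tau)$, whereas the proposition requires excluding that a \emph{different edge, disconnected with the plant in a different state,} reproduces the same residual. The paper closes precisely this case by analysing the $4N$-dimensional composite \eqref{confronto} built from the two observer-plus-faulty-plant loops, restricted to initial states of the form $({\bf v}_{hr},{\bf v}_{hr},{\bf v}_{ij},{\bf v}_{ij})$ with ${\bf v}_{hr}$ and ${\bf v}_{ij}$ allowed to differ.

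Your route can be salvaged, but only by adding an explicit argument that pins down the fault-time state: the pre-fault output history over a window of length $\tau_0$ determines ${\bf x}(\tau)$ via observability of $(A,\begin{bmatrix}I_p&0\end{bmatrix})$, so any hypothesis consistent with the \emph{full} data set must have $\bar{\bf x}_\tau={\bf x}(\tau)$, at which point your equivalence becomes trivially true and the remainder of your proof works. As written, however, you appeal to information beyond the residual ${\bf r}(t)$, $t\ge\tau$, without saying so, and you leave the different-state case open. A smaller remark: hypothesis (i) never actually enters your identification argument (you use only strong connectedness, (ii) and (iii)); in the paper it is needed, through condition iii) of Proposition \ref{ostrega}, to conclude that \eqref{nuova} has full column rank for $\lambda\ne1$ in the step where one of the two blocks of the candidate unobservable vector is forced to vanish.
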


\begin{proof}
   To prove that the previous observer-based residual generator produces distinct residual sequences corresponding to different faulty systems
(provided that ${\bf x}(\tau)$, the  state of the multi-agent system at the time of edge disconnection, is not in the equilibrium, yet, namely it is not a multiple of ${\bf 1}_{N}$), it is sufficient to prove that if $(r,h)\ne (j,i)$ then   the  two systems 
$$\left( {\mathbb A}_{hr}, \begin{bmatrix}I_p&0_{p\times (N-p)} &- I_p&0_{p\times (N-p)}\end{bmatrix}  
\right)$$
$$ \left(  {\mathbb A}_{ij}, \begin{bmatrix}I_p&0_{p\times (N-p)} &- I_p&0_{p\times (N-p)}\end{bmatrix} 
\right),
$$
with
$${\mathbb A}_{hr} := \begin{bmatrix} A_L &- L \begin{bmatrix}I_p&0_{p\times (N-p)} \end{bmatrix} \cr 0 &  {\bar A}_{hr}\end{bmatrix},$$
$${\mathbb A}_{ij} := \begin{bmatrix} A_L & -   L \begin{bmatrix}I_p&0_{p\times (N-p)} \end{bmatrix} \cr 0 & {\bar A}_{ij}\end{bmatrix},$$
generate distinct residual trajectories, provided that neither of them has already reached the equilibrium {\color{black} at the time the disconnection occurs}. This amounts to saying that 
the only unobservable states of the system
\be
\left(
\begin{bmatrix} {\mathbb A}_{hr} &\vline& 0\cr\hline 0 &\vline& {\mathbb A}_{ij}\end{bmatrix}, \begin{bmatrix} I_p&0  &
-I_p&0&\vline \ 
-I_p& 0  & I_p& 0 \end{bmatrix}
\right),
\label{confronto}\ee
{\color{black} taking the form $\begin{bmatrix} {\bf v}_{hr}\cr {\bf v}_{hr}\cr {\bf v}_{ij} \cr {\bf v}_{ij}\end{bmatrix}$}   are those belonging  to $\langle  \begin{bmatrix}   {\bf 1}_{2N} \cr 0 \end{bmatrix}, \begin{bmatrix}   0\cr {\bf 1}_{2N}\end{bmatrix}\rangle.$ \\
{\color{black} By making use of the PBH observability matrix:
\be
\begin{bmatrix} \lambda I_{2N} - {\mathbb A}_{hr} &   0\cr 
0 & \lambda I_{2N} - {\mathbb A}_{ij}\cr \hline
[I_p \ 0\ - I_p \ 0] &   [ - I_p \ 0 \ I_p \ 0]\end{bmatrix} 
\label{PBH_lambda}\ee
 it is easily seen that 
a vector with the previous block structure belongs to the kernel of \eqref{PBH_lambda} for $\lambda=1$ if and only if ${\bf v}_{hr}$ is a common eigenvector (corresponding to $\lambda=1$) of $A$ and $\bar A_{hr}$ and ${\bf v}_{ij}$ is a common eigenvector (corresponding to $\lambda=1$) of $A$ and $\bar A_{ij}$. Therefore
the overall vector belongs  to $\langle  \begin{bmatrix}   {\bf 1}_{2N} \cr 0 \end{bmatrix}, \begin{bmatrix}   0\cr {\bf 1}_{2N}\end{bmatrix}\rangle.$ }\\
On the other hand, if $\lambda \in \sigma(\bar A_{hr}), \lambda\ne 1,$ then  it is easy to see that under the hypothesis 
 \eqref{disj_spectra_p} and by the observability of $(A_L, \begin{bmatrix} I_p & 0\end{bmatrix})$, we have 
 $\lambda\not\in \sigma({\mathbb A}_{ij})$, and therefore 
 {\color{black}it must be ${\bf v}_{ij}=0$. But this means that $\begin{bmatrix} {\bf v}_{hr}\cr {\bf v}_{hr}\end{bmatrix}$
 should belong to the kernel of \eqref{nuova}, but for $\lambda\ne 1$ the matrix \eqref{nuova} is of full column rank. Analogous reasoning holds if $\lambda \in \sigma(\bar A_{ij}), \lambda\ne 1.$}
\end{proof}

{\color{black}
\section{An illustrative example}
\label{example}

We now apply the previous results   to the case
of a network of $7$ agents, whose communication graph is depicted in Figure
2
 assuming that each agent is described as a discrete-time integrator and runs the algorithm
  \eqref{cons_protocol} with $\kappa=0.25$. All the weights of
the graph are equal to $1$. The set of the observed
nodes is $\{1, 2, 3\}$, and we apply the strategy described in Section V.
The resulting system matrix is \be\label{Asist}
A=\begin{bmatrix}
  0.75 & 0 & 0 & 0.25 & 0 & 0& 0 \\
  0 & 0.75 & 0 & 0 & 0.25 & 0& 0 \\
  0  & 0 & 0.75 & 0 & 0 & 0.25& 0 \\
  0  & 0 & 0 & 0.75 & 0 & 0& 0.25 \\
  0.25  & 0 & 0    & 0  & 0.5 & 0.25 & 0\\
0 &  0.25  & 0 & 0    & 0  & 0.75 & 0 \\
  0  & 0 & 0.25    & 0  & 0.25 & 0 & 0.5\\
\end{bmatrix}
\ee
and it is easily verified that the pair $(A, \begin{bmatrix} I_3 &
  0\end{bmatrix})$ is observable.
A dead beat state observer has been derived  following three steps:
(1)  the pair $(A,   [ 
 I_3 \ 0] )$, with $A$ as in \eqref{Asist}, 
is reduced to multi-output observability canonical form $(A_o, H_o)$ by resorting to a suitable transformation matrix $T$ (see
e.g. \cite{antbook}). 
(2) The matrix  $L_o$ that makes $A_o + L_o H_o$ nilpotent with  minimum possible nilpotency index is trivially 
obtained by imposing that  $A_o + L_o H_o$ is block-diagonal, with diagonal blocks that are 
(single-output) observability canonical forms with zero coefficients in the last column.
(3)  The desired observer gain matrix is then $L=T^{-1}L_o$.\\
It is worth noticing that there is a  transient phase, due to the presence of an estimation error, consisting of $3$ time steps, after which
the residual becomes zero 
(i.e.,  $\tau_0=3$).

\begin{figure}\label{fgiTAC}
\begin{center}
\hspace{-1cm}
\includegraphics[scale=0.3, angle= 90]{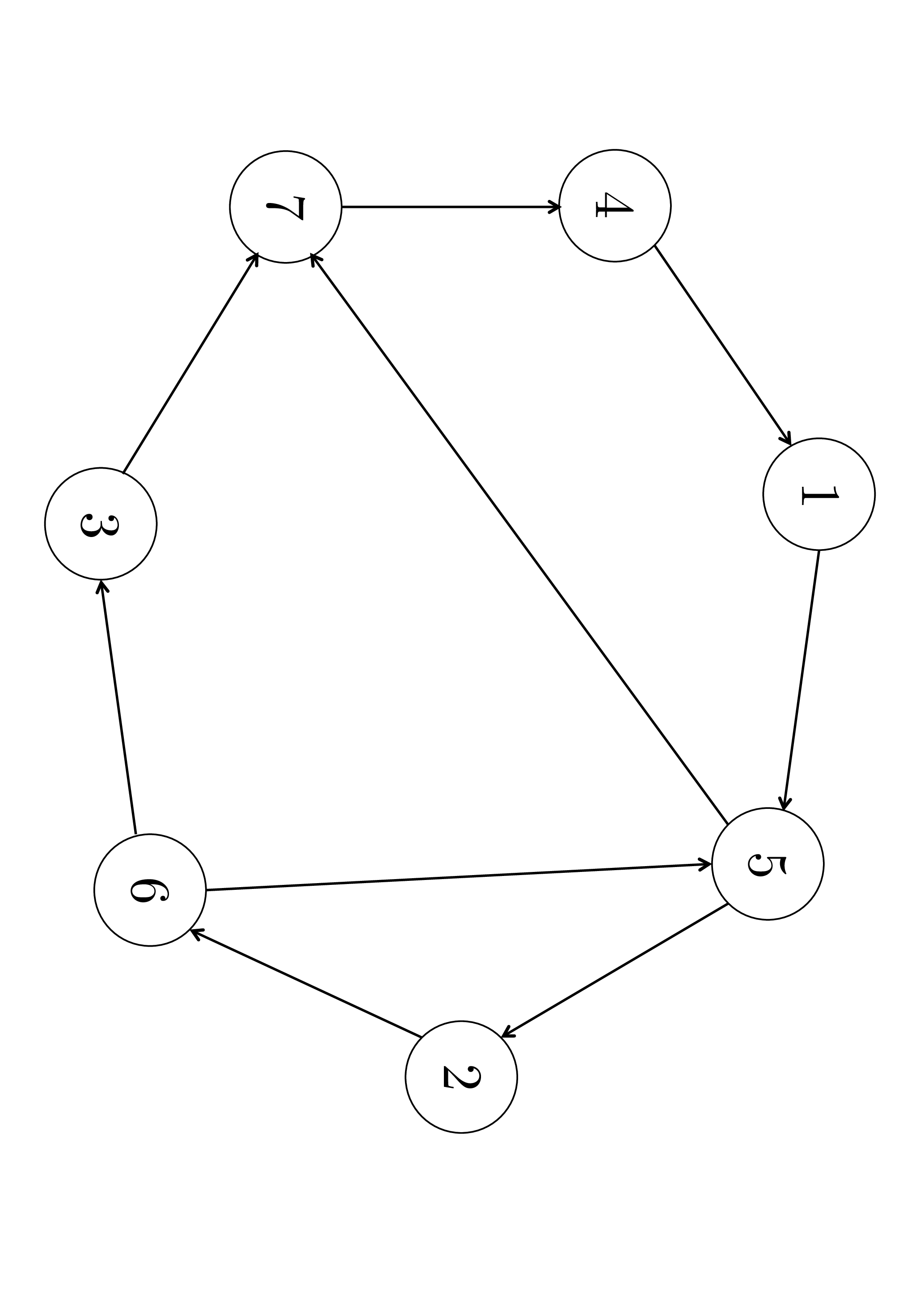}%
\caption{A sketch of the graph of the illustrative example of Section \ref{example}.}
\end{center}
\end{figure}

Two simulation results are plotted in
Figure 3.  The three curves 
represent, in both cases,  the multi-agent system outputs, namely the states of the first three agents.
On the other hand, instead of reporting the values of the residual ${\bf r}(t)$, we have chosen to 
represent  
with  black circles the 
 \emph{detection signal}   $d(t)$
 which is unitary if ${\bf r}(t)\ne0$ and $0$ if ${\bf r}(t)=0$. 
%
In both simulations we have disconnected the edge $(6,5)$ in the interval
$[10, 14]$, and the edge  $(5,7)$ in the interval
$[20, 24]$.   It is worth noticing that $A$ and
 $\bar{A}_{7,5}$ do not have common eigenvalues (apart from $1$), so the original and the faulty networks are
  discernible, while $A$ and
  $\bar{A}_{5,6}$ have $\lambda=0.5$ and the corresponding eigenvector in
  common, so this case fails to satisfy condition (v) of
  Proposition \ref{classica} and, in turn, condition  (iv) of
  Proposition \ref{ostrega} and condition (i) of Proposition \ref{faultdetp_works}.

In the first simulation, corresponding to the upper plot of Figure  3, it is assumed
${\bf x}(0)=[ 10 \ -1 \ 1 \ 8 \ 5 \ 5 \ 12 ]^\top$ and $\hat {\bf
  x}(0)=0.$
After an initial transient of 3 steps, the estimated   state $\hat {\bf x}(3)$  is equal to the real state ${\bf x}(3)$
(while ${\bf x}(t)\ne \hat {\bf x}(t)$  for $t<3$, and this is the reason why the estimation signal is nonzero), and the detection signal is zero up to $t=11$ when the
disconnection of the edge $(6,5)$ is detected.
Then the link is restored, and 
the  disconnection of the edge $(5,7)$ is detected at time $t=22$.

The  second simulation  shows what it may happen if the 
conditions of  Propositions \ref{ostrega} and  Proposition \ref{faultdetp_works}  are not met.  
It is assumed ${\bf x}(0)=[ -5 \ 5 \ 5 \ -5 \ -5 \ 5 \ -5 ]^\top$ and $\hat {\bf
  x}(0)=0$. After an initial transient phase, $\hat {\bf x}(3)={\bf x}(3)$,
however, the detection signal is zero up to time $t=22$, when the
second edge disconnection is detected, and this shows that the first
link disconnection remains undetected because of the special
structure of the graph topology and  the specific value of the system state at the time of the disconnection.


\begin{figure}\label{SimRes}
\begin{center}
\includegraphics[scale=0.42]{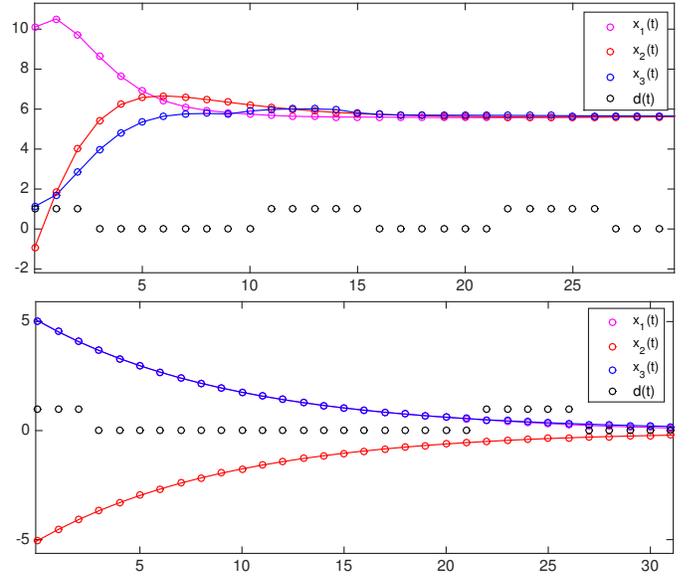}
\caption{Simulation Results: output evolution and detection signal of
  system \eqref{Asist}
  under different initial conditions. In the bottom plot, the first
  link disconnection cannot be detected from system evolution.}
\end{center}
\end{figure}
}
  

\section{Conclusions}

{\color{black} In this paper we have addressed the problem of detecting and identifying an edge disconnection
in a discrete-time   consensus network, by assuming that the link failure does not compromise the strong 
connectedness of the underlying directed communication network.
The cases when the states of all the agents are available and when only a proper subset of them is available
are both considered, and sufficient conditions ensuring that the problem is solvable are provided. An example concludes
the  paper, illustrating both the case when detection from the measurement of the states of 3 of the 7 agents is possible and the case when it is not.

It is worth noticing that we have solved the discernibility problem from the first $p$ states by resorting to a full order dead-beat observer, but due to the structure of the state to output matrix the use of a reduced-order dead-beat observer would be  straightforward, and it would ensure the same performance in terms of nilpotency index.

Future research efforts will aim at finding an algorithm to efficiently identify the disconnected edge when only $p$ of the $N$ states are available, as it has been  done here in the case when all the agents are measured. Also, the case of noisy measurements and/or modelling errors   needs to be addressed.

 As mentioned  in the Introduction,  
 distributed fault detection and identification algorithms have been proposed by assuming that faults are additive.
 It would be of extreme interest to adapt such algorithms to the specific case when the fault 
is an edge disconnection, without losing the information about the specific nature and structure of the fault.
}
%
%
%
%
%
  
\section*{Appendix: {\color{black}A technical lemma}}

\begin{lemma}\label{tech4}
Consider  the positive irreducible  matrix $A=I_N - \kappa {\mathcal L}\in {\mathbb R}^{N\times N}$ and let
$J_A$ be its (real) Jordan form. Let $T\in {\mathbb R}^{N\times N}$ be the nonsingular transformation matrix  
such that
$$T^{-1} A T = J_A = \begin{bmatrix} 1& 0\cr 0 & \tilde J_A\end{bmatrix} = 
\begin{bmatrix} 1& 0 & \dots & 0 \cr 
0 & J_2   & \dots & 0\cr
\vdots & & \ddots &\vdots\cr 0 & 0 & \dots  & J_n
\end{bmatrix}, $$
where $J_i$ is a Jordan block corresponding to $\lambda_i$ and $\lambda_i\ne 1$ for every $i\in [2,n]$.
 Define $W$ as in \eqref{defW}, namely as
$$W := \begin{bmatrix} {\bf 0}_{N-1} & I_{N-1}\end{bmatrix} T^{-1}.$$
Then for every pair of distinct nodes $h, i \in {\mathcal V}=[1,N], i\ne h,$
$$\langle W {\bf e}_i \rangle \ne \langle  W {\bf e}_h \rangle.$$
\end{lemma}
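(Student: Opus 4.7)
The plan is to reduce the claim to a statement about the kernel of $W$, and to exploit the specific structure of the transformation matrix $T$.

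First, I will observe that, by construction, the first column of $T$ is $\mathbf{1}_N$ (the Perron eigenvector of $A$ corresponding to the unitary eigenvalue). Writing $T^{-1}$ in partitioned form as $T^{-1}=\left[\begin{smallmatrix} \mathbf{u}^\top \\ S \end{smallmatrix}\right]$, with $\mathbf{u}^\top\in\mathbb{R}^{1\times N}$ and $S\in\mathbb{R}^{(N-1)\times N}$, definition \eqref{defW} gives $W=S$. Next I will characterize $\ker W$: if $\mathbf{v}\in\ker W$, then $T^{-1}\mathbf{v}=\alpha\mathbf{e}_1$ for some $\alpha\in\mathbb{R}$, whence $\mathbf{v}=\alpha T\mathbf{e}_1=\alpha\mathbf{1}_N$. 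Thus
\[
\ker W \;=\; \langle \mathbf{1}_N\rangle.
\]

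Since $N>2$, for every $i\in[1,N]$ the canonical vector $\mathbf{e}_i$ does not belong to $\langle\mathbf{1}_N\rangle$, so $W\mathbf{e}_i\neq 0$. Hence every subspace $\langle W\mathbf{e}_i\rangle$ is one-dimensional.

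To conclude, I will argue by contradiction. Suppose that for some $i\neq h$ one has $\langle W\mathbf{e}_i\rangle=\langle W\mathbf{e}_h\rangle$. Then there exists $c\neq 0$ with $W\mathbf{e}_i=c\,W\mathbf{e}_h$, i.e.\ $\mathbf{e}_i-c\mathbf{e}_h\in\ker W=\langle\mathbf{1}_N\rangle$. Thus there exists $\beta\in\mathbb{R}$ such that $\mathbf{e}_i-c\mathbf{e}_h=\beta\mathbf{1}_N$. Inspecting any component $j\in[1,N]\setminus\{i,h\}$ (which exists because $N>2$) forces $\beta=0$, and then the identity $\mathbf{e}_i=c\mathbf{e}_h$ contradicts the linear independence of distinct canonical vectors.

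The argument is essentially a straightforward kernel computation; the only step that requires care is identifying $\ker W$ with $\langle\mathbf{1}_N\rangle$, which hinges on the fact that the first column of $T$ was chosen to be $\mathbf{1}_N$ (the dominant eigenvector). The assumption $N>2$ is used in a substantive way to guarantee a third index at which to read off $\beta=0$; without it, any two distinct canonical vectors would span a 2-dimensional subspace inside which candidate relations could not be ruled out in this simple way.
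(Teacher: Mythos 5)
Your proof is correct and follows essentially the same route as the paper's: both arguments reduce the claim to the observation that $\ker W=\langle T\mathbf{e}_1\rangle=\langle \mathbf{1}_N\rangle$ and then note that a nontrivial combination $\alpha\mathbf{e}_i+\beta\mathbf{e}_h$ (with $N>2$) cannot be a multiple of $\mathbf{1}_N$. Your version merely makes the kernel computation explicit as a first step, which is a harmless reorganization.
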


\begin{proof}
Suppose, by contradiction, that there exist $h, i \in {\mathcal
  V}=[1,N], i\ne h,$  and   nonzero  $\alpha, \beta\in {\mathbb R}$, such that 
$$0= W [\alpha {\bf e}_i + \beta {\bf e}_h]= 
\begin{bmatrix} {\bf 0}_{N-1} & I_{N-1}\end{bmatrix} T^{-1}[\alpha {\bf e}_i + \beta {\bf e}_h].$$
Since $T$ is a nonsingular matrix,
there exists a vector ${\bf c} \ne 0$ such that
$\alpha {\bf e}_i + \beta {\bf e}_h = T {\bf c}.$ By replacing this expression in the previous identity we obtain
$$0= 
\begin{bmatrix} {\bf 0}_{N-1} & I_{N-1}\end{bmatrix} T^{-1}T {\bf c}= \begin{bmatrix} {\bf 0}_{N-1} & I_{N-1}\end{bmatrix} {\bf c}.$$
This amounts to saying that ${\bf c}= \gamma {\bf e}_1$ for some $\gamma\ne 0$ and
$$\alpha {\bf e}_i + \beta {\bf e}_h = T \gamma {\bf e}_1 = \gamma {\bf 1}_N.$$
 But this is clearly not possible, since the vector on the left hand side has two nonzero entries, while the one 
on the right hand side has all nonzero (and identical) entries.
\end{proof}
\smallskip

\end{document}